\def\phi{\varphi}
\def\ang{\gamma}
\def\Scal{{\mathcal S}}
\def\Ucal{{\mathcal U}}
\def\Hcal{{\mathcal H}}
\def\Dcal{{\mathcal D}}
\def\tr{\operatorname{tr}}
\def \CC{{\mathcal C}}
\def\le{\left}
\def\ri{\right}
\def \d{{\mathrm d}}
\def\res{\mathop{\mathrm {res}}}
\def\Psih{\widehat{\Psi}}
\def\Phih{\widehat{\Phi}}
\def\be{\begin{equation}}
\def\ee{\end{equation}}
\def\Hcal{{\mathcal H}}
\def\Acal{\mathcal A}
\def \eqref #1{(\ref{#1})}
\def \1{\mathbf 1}
\def\vt\tilde{v}
\def\C{{\mathbb C}}
\def\Z{{\mathbb Z}}
\def\a{\alpha}
\def\g{\gamma}
\def\b{\beta}
\def\l{\lambda}
\def\s{\sigma}
\def\p{\partial}
\def\Ccal{{\mathcal C}}
\def\Ch{{\widehat{{\mathcal C}}}}
\def\pa{\partial}
\def\f{\frac}
\def\Scal{{\mathcal S}}
\numberwithin{equation}{section}
\newtheorem{Theorem}{Theorem}[section]
\newtheorem*{Theorem*}{Theorem}
\newtheorem{Corollary}[Theorem]{Corollary}
\newtheorem{Lemma}[Theorem]{Lemma}
\newtheorem{Proposition}[Theorem]{Proposition}
 { \theoremstyle{definition}
\newtheorem{Definition}[Theorem]{Definition}

\newtheorem{Remark}[Theorem]{Remark} }
\begin{document}

\allowdisplaybreaks

\renewcommand{\thefootnote}{}

\newcommand{\arXivNumber}{2303.05602}

\renewcommand{\PaperNumber}{104}

\FirstPageHeading

\ShortArticleName{Szeg\H{o} Kernel and Symplectic Aspects of Spectral Transform}

\ArticleName{Szeg\H{o} Kernel and Symplectic Aspects of Spectral\\ Transform for Extended Spaces of Rational Matrices\footnote{This paper is a~contribution to the Special Issue on Evolution Equations, Exactly Solvable Models and Random Matrices in honor of Alexander Its' 70th birthday. The~full collection is available at \href{https://www.emis.de/journals/SIGMA/Its.html}{https://www.emis.de/journals/SIGMA/Its.html}}}

\Author{Marco BERTOLA, Dmitry KOROTKIN and Ramtin SASANI}
\AuthorNameForHeading{M.~Bertola, D.~Korotkin and R.~Sasani}

\Address{Department of Mathematics and Statistics, Concordia University,\\
1455 de Maisonneuve W., Montr\'eal, H3G 1M8 Qu\'ebec, Canada}
\Email{\!\href{mailto:marco.bertola@concordia.ca}{marco.bertola@concordia.ca},\! \href{mailto:dmitry.korotkin@concordia.ca}{dmitry.korotkin@concordia.ca},\! \href{mailto:ramtin.sasani@concordia.ca}{dramtin.sasani@concordia.ca}}

\ArticleDates{Received March 14, 2023, in final form December 02, 2023; Published online December 22, 2023}

\Abstract{We revisit the symplectic aspects of the spectral transform for matrix-valued rational functions with simple poles. We construct eigenvectors of such matrices in terms of the Szeg\H{o} kernel on the spectral curve. Using variational formulas for the Szeg\H{o} kernel we construct a new system of action-angle variables for the canonical symplectic form on the space of such functions. Comparison with previously known action-angle variables shows that the vector of Riemann constants is the gradient of some function on the moduli space of spectral curves; this function is found in the case of matrix dimension 2, when the spectral curve is hyperelliptic.}

\Keywords{spectral transform; Szeg\H{o} kernel; variational formulas}

\Classification{53D30; 34M45}

\renewcommand{\thefootnote}{\arabic{footnote}}
\setcounter{footnote}{0}

\section{Introduction}

The spectral transform for matrix-values functions rationally depending on a spectral parameter
is at the core of the modern theory
of integrable systems (see
the textbook \cite{BBT} and references therein
for a detailed presentation of the subject and its history). Such a transform admits a~far-reaching generalization to the space of Higgs fields on curves of higher genus, which is central in the theory of Hitchin's systems \cite{DonMar,Hitchin1,Hitchin2}. The infinite dimensional analog of this transform for integrable PDEs of Korteweg--de Vries type is the starting point of the theory of soliton equations developed over the last 50 years \cite{ZMNP}.

The spectral transform can be shown to be symplectic with respect to natural symplectic structures on the target and the source. This leads to the Liouville integrability of the natural Hamiltonian flows (thus implying the name ``integrable systems''),
see \cite{BBT,FadTak}. The integrability also plays the main role in quantization.

In this paper, we analyze symplectic aspects of the spectral transform for matrix-valued rational functions with simple poles
which is the case of primary importance for finite-dimensional integrable systems.
We introduce the extension of the standard spectral transform, following the recent paper
\cite{CMP} on the symplectic theory of the monodromy map for Fuchsian connections.
The first main object is the phase space $\Acal$ defined as follows:
\begin{gather}\label{Acal11}
\Acal=\Biggl\{\bigl\{G_j,L_j\bigr\}_{j=1}^m ,\, \sum_{j=1}^m G_j L_j G_j^{-1} =0\Biggr\}\Big/{\sim},
\end{gather}
where $G_j\in {\rm SL}(n)$ and $L_j\in \mathfrak{sl}(n)$; $L_j$ are diagonal matrices such that all eigenvalues of each $L_j$ are different. The equivalence relation $\sim$
identifies the sets of $G_j$'s which differ by a simultaneous left action $G_j\to {\rm SG}_j$ with $S\in {\rm SL}(n)$
($L_j$ are kept unchanged).
Let
\begin{gather*}
A(z)=\sum_{j=1}^m \f{G_j L_j G_j^{-1}}{z-t_j},
\end{gather*}
where $t_j\neq t_k$ is a set of poles.
Denote by $\Dcal$ the locus of codimension 1 in $\Acal$ defined by the condition that the spectral curve
\begin{gather}\label{charpo}
 \det \le(A(z)-y\1\ri)=0
\end{gather}
 is non-smooth or reducible.

The space $\Acal$ \eqref{Acal11} is equipped with the following symplectic form \cite{CMP}:
\begin{gather}\label{sf}
\omega_\Acal = \delta \theta_\Acal=
\tr \sum_{j=1}^m \bigl\{ \delta G_j G_j^{-1}\wedge L_j \delta G_j G_j^{-1}+ \delta L_j \wedge G_j^{-1} \delta G_j\bigr\},
\end{gather}
where the symplectic potential is given by
\begin{gather*}
\theta_\Acal=\tr \sum_{j=1}^m L_j G_j^{-1} \delta G_j
\end{gather*}
(clearly, $\delta \theta_\Acal =\omega_\Acal$).
The form $\omega_\Acal$ \eqref{sf} is invariant under the simultaneous transformation $G_j\to {\rm SG}_j$, and, therefore, it is indeed the form
on $\Acal$. Moreover, the form $\omega_\Acal$ is {\it non-degenerate} on $\Acal$ \cite{CMP}.

The second main object is the space of spectral data $\Scal$ defined as follows:
\begin{gather}\label{SP}
\Scal=\bigl\{ \{Q_j\}_{j=2}^n,\, q\in \C^g, \,\{R_j\in \operatorname{Diag} ({\rm SL}(n))\}_{j=1}^m, \,\mathfrak t \bigr\},
\end{gather}
where $Q_j(z)$ is an arbitrary rational function with poles of order $j$ at the points $t_k$ such that
the $j$-differential $Q_j(z) (\d z)^j$ is holomorphic at $z=\infty$; $q\in \C^g$ and $R_1,\dots, R_m$ are ${\rm SL}(n)$ diagonal matrices.
We require the functions $Q_j$ satisfy the condition that the curve $\CC$ defined via
\begin{gather}\label{fyz}
y^n+Q_{2}(z) y^{n-2}+\dots +Q_n(z)=0
\end{gather}
is smooth and non-reducible, $\mathfrak t$ in \eqref{SP} denotes a Torelli marking of the curve $\CC$, i.e., the choice of canonical basis of cycles $\bigl\{a_\a,b_\a\bigr\}_{\a=1}^g$ with intersection pairing $a_i\circ b_j=\delta_{ij}$, where $g$ is the genus~of~$\CC$.

The direct spectral transform $F\colon \Acal\setminus\Dcal\to\Scal$ depending on $m$ points $t_j\in \C$ and the Torelli marking of the spectral curve
 is defined as follows.
Given a point in the space $\Acal$, we define
\begin{gather*}
A_j=G_j L_j G_j^{-1}
\end{gather*}
(according to the definition of the space $\Acal$ we have $\sum_{j=1}^m A_j=0$)
and construct the rational matrix
\begin{gather*}
A(z)=\sum_{j=1}^m \f{A_j}{z-t_j}.
\end{gather*}

The meromorphic function $Q_j$ is then defined to be the coefficient in front of $y^{n-j}$ of the characteristic polynomial of the matrix $A(z)$.
In particular, $Q_1(z)=\tr A(z)=0$. Define now the irreducible and smooth curve $\CC$ by the equation \eqref{charpo}.
The genus of $\CC$ equals
\begin{gather*}
g=\f{n(n-1)}{2}m+1-n^2.
\end{gather*}
The vector $q$ is constructed via the Abel map of a divisor of degree $g-1$ associated to normalized eigenvectors of the matrix $A(z)$ (see Section \ref{DST} for details). Starting from some distinguished set of eigenvectors of $A_j$ the diagonal matrices $R_j$ determine the normalization of an arbitrary set of eigenvectors relative to this distinguished basis, see Section \ref{DST} for details.

The inverse spectral map $F^{-1}$ can be explicitly described as follows.
Consider a point of the space $\Scal$ \eqref{SP}, choose a basepoint
$z_0\in\C$, and define the spectral curve $\CC$ via \eqref{fyz}.
Introduce the canonical meromorphic abelian differential $v=y\,\d z$ on $\CC$.
Denote by $\Theta(\cdot)$ the theta-function on the Jacobian of
$\CC$ corresponding to the Torelli marking $\mathfrak t$
and define the codimension one locus $(\Theta)\subset \Scal$ by the condition $\Theta(q)=0$.
The main ingredient of the inverse spectral transform
\begin{gather*}
F^{-1}\colon \ \Scal\setminus (\Theta)\to \Acal
\end{gather*}
is the Szeg\H{o} kernel defined by the formula \cite{Fay73}
\begin{gather}\label{Szegodef}
S_q(x,y)= \f{\Theta(\Ucal(x)-\Ucal(y)+q)}{\Theta(q) E(x,y)},
\end{gather}
where $E(x,y)$ is the prime-form, $\Ucal(x)$ is the Abel map such that
$\bigl(\Ucal(x)-\Ucal(y)\bigr)_\a=\int_{y}^x v_\a$ ($v_\a$~is the holomorphic abelian differential on $\CC$ normalized by the property $\oint_{a_\beta} v_\a=\delta_{\a\b}$)
 and $\Theta$ is the Riemann theta-function on the curve \eqref{fyz}, $q\in \C^g$.
The Szeg\H{o} kernel is the reproducing kernel in the line bundle of degree $g-1$ over $\CC$ defined by the vector $q$, see \cite[formula~(2.5)]{Fay92} discussion around this formula.
The main analytical tools used in this paper are the variational formulas for the Szeg\H{o} kernel which we derive on the basis of results of \cite{IMRN,KalKor,JDG}.

Denote by $z^{(
{k})}$ the point on $
{k}$-th sheet of $\CC$ having projection $z\in \C$. Now for a point of the space $\Scal\setminus (\Theta)$, we define
\begin{gather*}
\bigl(L_j\bigl)_{kk}={\rm res}\big|_{t_j^{(k)}}v, \qquad
G_j= \Psih(t_j) R_j,
\end{gather*}
where
\begin{gather}\label{psihint}
\Psih_{\ell k}(z)=\psi\bigl(z^{(k)}, z_0^{(\ell)}\bigr), \qquad \psi(x,x_0)=S_{q}(x,x_0) \f{z(x)-z(x_0)}{\sqrt{\d z(x)}\sqrt{\d z(x_0)}}.
\end{gather}
The function \eqref{psihint} was used in \cite{Annalen} to solve an arbitrary Riemann--Hilbert problem with quasi-permutation monodromy matrices; various ingredients of the construction of \cite{Annalen} appeared already in the 1987 paper by Knizhnik \cite[Section IV]{Knizhnik}. The idea to use \eqref{psihint} for diagonalization of rational matrices goes back to \cite{Borot_Eynard}, where it was used to reformulate the standard formalism of matrix Baker--Akhiezer function
{ that was pioneered in \cite{Krichever77} and generalized to the higher genus case in \cite{Krichever} (see also the further history in the textbook \cite{BBT})}.

The natural coordinates on $\Scal$ are defined as follows. These are the $a$-periods $I_\a=\oint_{a_\a} v$ of the differential $v=y\,\d z$, the components of the vector $q$,
the variables
\smash{$\bigl\{\mu_j^{(k)}\bigr\}$} for $j=1,\dots,m$ and $k=1,\dots,n-1$ such that
\begin{gather*}
\mu_j^{(k)}-\mu_j^{(k-1)}={\rm res}|_{t_j^{(k)}} v,\qquad k=1,\dots,n-1,
\end{gather*}
(with the understanding that $\mu_j^{(0)}=\mu_j^{(n)}=0$).
The toric variables $\rho_j^{(k)}$ parametrize the diagonal matrices $R_j$ with $r_j^{(k)}=(R_j)_{kk}$ via
\begin{gather*}
\log r_j^{(k)}=\rho_j^{(k)}-\rho_j^{(k-1)}.
\end{gather*}

Let us now describe the main results of this paper.

First, we prove that the symplectic potential $\theta_\Acal$ and the associate symplectic form $\omega_\Acal$ can be written as follows:
\begin{gather*}
\theta_{\Acal}= \sum_{\a=1}^g I_\a \delta q_\a + \sum_{j=1}^m\sum_{k=1}^{n-1} \mu_j^{(k)}\delta \rho_j^{(k)},
\\
\omega_{\Acal}= \sum_{\a=1}^g \delta I_\a \wedge \delta q_\a
+ \sum_{j=1}^m\sum_{k=1}^{n-1} \delta\mu_j^{(k)} \wedge \delta\rho_j^{(k)}.
\end{gather*}

Let us now introduce the space $\Acal_0$
\begin{gather*}
\Acal_0=\Biggl\{\bigl\{A_j\bigr\}_{j=1}^m,\ \sum_{j=1}^m A_j =0\Biggr\}\Big/{\sim}
\end{gather*}
such that all $A_j$ are diagonalizable, and the equivalence relation $\sim$ identifies the sets of $A_j$ which differ by a simultaneous conjugation $A_j\to SA_j S^{-1}$ with $S\in {\rm SL}(n)$. There is a trivial map $\Acal\to \Acal_0$ which ``forgets'' about the toric variables, i.e., about the normalization of the eigenvectors of $A_j$.

In particular, we can fix the matrices $L_j$'s and perform the symplectic reduction on the corresponding level sets \smash{$\mu_j^{(k)}=\rm const$}. Then we get the second main result of this paper which is the expression for the reduced symplectic form
{on the level sets} within $\Acal_0$, which we denote by $\Acal_0^L$,\looseness=-1
\begin{gather}\label{formleaf}
\omega_{\Acal_0}^L= \sum_{\a=1}^g \delta I_\a \wedge \delta q_\a.
\end{gather}
This result looks similar, but is actually different from the known expression for the form $ \omega^L_{\Acal}$ (see, for example, \cite[formula (5.75)]{BBT}):
\begin{gather}\label{formleaf1}
\omega_{
{\Acal}}^L= \sum_{\a=1}^g \delta I_\a \wedge \delta \ang^x_\a,
\end{gather}
where $\ang^x$ is defined via the Abel map of the ``dynamical divisor'' with basepoint $x$ such that~$z(x)$ is independent of all dynamical variables. The
relationship between the Abel map $\ang^x$ and the vector $q$ is given by
\begin{gather*}
\ang^x=q+ K^x,
\end{gather*}
where $K^x$ is the vector of Riemann constants with the same basepoint $x$.

The coincidence of 2-forms \eqref{formleaf} and \eqref{formleaf1} leads to the third main result of the paper which is the following (somehow surprising) integrability condition for the vector of Riemann constants:
\begin{gather}\label{intK}
\f{\delta K^x_\a}{\delta I_\b}=\f{\delta K^x_\b}{\delta I_\a}
\end{gather}
which implies that $K$ is a gradient of some scalar function of moduli $I_\a$.

The relation \eqref{intK} is unexpected. Here we also verify it directly, using variational formulas for the vector of Riemann constants on the moduli space.

An interesting open problem (which is rather trivial for $n=2$) is to compute the potential for the vector $K^x$ in closed form; such a potential is nothing but the function generating the change of Darboux coordinates from $\bigl(I_\a, q_\a\bigr)$ to $\bigl(I_\a, \ang^x_\a\bigr)$.

To prove the main results of the paper we have derived new variational formulas for the Szeg\H{o} kernel and the vector of Riemann constants with respect to the moduli $I_\alpha$ and the eigenvalues~\smash{$\mu_j^{(k)}$} (Sections \ref{secvarSz} and \ref{VarKsec}).

\section[Spaces A and S and their symplectic geometry]{Spaces $\boldsymbol\Acal$ and $\boldsymbol\Scal$ and their symplectic geometry}

\subsection[The phase space \Acal]{The phase space $\boldsymbol\Acal$}

Consider the phase space $\Acal$ defined as follows:
\begin{gather}\label{Acal}
\Acal=\Biggl\{\bigl\{G_j,L_j\bigr\}_{j=1}^m,\, \sum_{j=1}^m G_j L_j G_j^{-1} =0\Biggr\}\Big/{\sim},
\end{gather}
where $G_j\in {\rm SL}(n)$, $L_j\in \mathfrak{sl}(n)$ are diagonal matrices and the equivalence relation $\sim$
identifies the sets $G_j$'s which differ by a simultaneous multiplication $G_j\to {\rm SG}_j$ with $S\in {\rm SL}(n)$ and
$L_j$ are kept unchanged.

The space $\Acal$ possesses the following symplectic form \cite{CMP}:
\begin{gather}\label{sf1}
\omega_\Acal =
\tr \sum_{j=1}^m \bigl(\delta G_j G_j^{-1}\wedge L_j \delta G_j G_j^{-1}+ \delta L_j \wedge G_j^{-1} \delta G_j\bigr).
\end{gather}
The symplectic form satisfies $
\omega_\Acal= \delta \theta_\Acal$,
where the symplectic potential is given by
\begin{gather}\label{thA}
\theta_\Acal=\tr \sum_{j=1}^m L_j G_j^{-1} \delta G_j.
\end{gather}
The form $\omega_\Acal$ \eqref{sf1} is invariant under the action of simultaneous left multiplication by ${\rm GL}_n$ as follows $G_j\to {\rm SG}_j$, and, therefore, it indeed factors through the quotient by this action to a~form
on $\Acal$. Moreover, the form $\omega_\Acal$ is non-degenerate on $\Acal$: for comparison, the symplectic form discussed in \cite[Section 5.9]{BBT} is given by \eqref{sf1} without the last term and it is non-degenerate only on symplectic leaves $L_j=\rm const$.

Following \cite{CMP}, now we discuss the Poisson structure corresponding to the form $\omega_{\Acal}$.
First, for any matrix $M$, we use the following notation for the Kronecker products:
\begin{gather}\label{tensornotation}
\overset{1}{M} = M\otimes \1, \qquad \overset{2}{M} =\1\otimes M.
\end{gather}

Consider the space of pairs $(G,L)$, where $G\in {\rm SL}(n)$ and $L=\operatorname{diag}\bigl(\l^{(1)},\dots,\l^{(n)}\bigr)$ with
$\sum_{j=1}^n \l^{(j)}=0$ and $\l^{(j)}\neq \l^{(k)}$. Define the symplectic form on this space
\begin{gather}\label{omega0}
\omega_0=
 \tr \bigl(\delta G G^{-1}\wedge L\,\delta G G^{-1}+ \delta L \wedge G^{-1} \delta G\bigr).
\end{gather}
If the eigenvalues $L$ are kept constant (i.e., $ \delta L\equiv 0$) then the form coincides with the form in~\cite{AlekMalk}.
Then the Poisson structure on $\Hcal$ inverse to the symplectic form \eqref{omega0}
 is (see \cite[Propositions~2.2 and~2.3]{CMP})
\begin{gather}\label{b2}
\big\{\overset{1}{G},\overset{2}{G}\big\}_0=-\overset{1}{G}\overset{2}{G} r(L), \qquad
\big\{\overset{1}{G},\overset{2}{L}\big\}_0=-\overset{1}{G} \Omega,
\end{gather}
where
\begin{gather*}
r(L)=\sum_{i<j} \f{E_{ij}\otimes E_{ji}-E_{ji}\otimes E_{ij}}{\l^{(i)}-\l^{(j)}}
\end{gather*}
and
\begin{gather*}
\Omega= \sum_{i=1}^n E_{ii}\otimes E_{ii} - \frac 1 n \1 \otimes \1,
\end{gather*}
we use the standard notation $E_{ij}$ for the matrix with only one non-vanishing element equal to~$1$
in the $(i,j)$ entry. The matrix $r(L)$ is a simplest example of dynamical $r$-matrix \cite{Etingof}.
In the formula \eqref{b2} we have used the notation \eqref{tensornotation} and the expression should be understood as collecting in a compact tensor notation the Poisson brackets for $\bigl\{G_{ab}, G_{cd}\bigr\}_0$ and $\bigl\{G_{ab}, L_{cd}\bigr\}_0$, for all choices of the indices $a$, $b$, $c$, $d$. Explicitly it reads as follows:
\begin{gather*}
\big\{G_{bj} , G_{c\ell } \big\}_0 = \frac{G_{b\ell} G_{cj}}{\lambda_j - \lambda_\ell}, \qquad
\big\{G_{bk}, \lambda_{\ell }\big\}_0 =-G_{bk} \delta_{\ell k}.
\end{gather*}

Theorem 2.1 of \cite{CMP} shows that the bracket \eqref{b2} induces the
Kirillov--Kostant Poisson bracket for $A=GLG^{-1}$. In ${\rm SL}(2)$ case, the bracket \eqref{b2} was first introduced in \cite{AF}.
Without the second term in the symplectic form \eqref{omega0} the form becomes degenerate in the extended space but it is non-degenerate on the leaves given by keeping $L$ constant. In this case, the corresponding Poisson structure on these leaves coincides with the Kostant--Kirillov--Souriau structure, i.e., the Lie--Poisson structure \cite{AlekMalk}.

The Poisson bracket $\{\cdot,\cdot\}_{\Acal}$ on the space $\Acal$ inverse to the form \eqref{sf1} is then given by
\begin{gather}\label{KKex}
\big\{\overset{1}{G_j},\overset{2}{G_j}\big\}_{\Acal} = \big\{\overset{1}{G_j},\overset{2}{G_j}\big\}_{0}, \qquad \big\{\overset{1}{G_j},\overset{2}{L_j}\big\} _{\Acal}= \big\{\overset{1}{G_j},\overset{2}{L_j}\big\} _0
\end{gather}
and all other brackets vanish.
To resolve the condition $\sum_{j=1}^n \l^{(j)}=0$ we shall use the parametri\-za\-tion
\begin{gather}\label{lmu}
\l^{(1)}= \mu^{(1)},\quad
\l^{(2)}= \mu^{(2)}-\mu^{(1)},\quad \dots, \quad \l^{(n-1)}= \mu^{(n-1)}-\mu^{(n-2)}, \quad \l^{(n)}=-\mu^{(n-1)},\!\!\!
\end{gather}
where $\mu^{(j)}$, $j=1,\dots,n-1$ are independent variables such that $\l^{(j)}\neq \l^{(k)}$.

\subsection{Extended space of spectral data}
Let $Q_2(z), \dots, Q_n(z)$ be generic rational functions with poles of order $j$ at the points $t_j$ and such that
the $Q_j(z)\,\d z^j$ are all holomorphic at $\infty\in \mathbb P^1$ with $z$ the affine coordinate.
Consider the affine curve $\CC$
defined by the equation
\begin{gather}\label{fyz1}
f(y,z)=0,\qquad
f(y,z):= y^n+Q_{2}(z) y^{n-2}+\dots +Q_n(z).
\end{gather}
The genericity of the functions $Q_j$ is, in our context, defined as irreducibility and smoothness of the curve $\CC$. On $\CC$ we choose a symplectic basis in the homology (i.e., a {\it Torelli marking}) $\mathfrak t =\{a_1,\dots, a_g, b_1,\dots, b_g\}$.
Now we define the extended space of spectral data $\Scal$ as follows:
\begin{gather}\label{SP1}
\Scal=\Big\{ \bigl\{Q_j\bigr\}_{j=2}^n,\, q\in \C^g, \bigl\{R_j\in \operatorname{Diag} \bigl({\rm SL}(n)\bigr)\bigr\}_{j=1}^m, \mathfrak t \Big\},
\end{gather}
where $Q_j(z)$ is an arbitrary rational function with poles of order $j$ at the points $t_j$ such that
the $j$-differential $Q_j(z) (\d z)^j$ is holomorphic at $z=\infty$.
The genus of $\CC$ is given by
\begin{gather}\label{genus1}
g=\f{n(n-1)}{2}m-n^2+1.
\end{gather}

Introduce also the ``theta-divisor'' $\operatorname{div}(\Theta)\subset \Scal$ defined by the equation $\Theta(q)=0$, where $\Theta$ is the Riemann theta-function corresponding to the Torelli marked curve $\CC$.

The term``extended'' in application to the space $\Scal$ refers to addition of diagonal matrices $R_j$ to the
set of differentials $Q_j$ and vector $q$.

The number of branch points of the curve \eqref{fyz1} equals (see \cite[equation~(2.3)]{IMRN}){\samepage
\begin{gather}\label{bp}
p=n(n-1)(m-2)
\end{gather}
which leads to the formula \eqref{genus1} for the genus by Riemann--Hurwitz formula.}

The moduli space of curves of the form \eqref{fyz1} for fixed $t_j$ has dimension $g+m(n-1)$.
The local coordinate system on this space is constructed as follows. First, these are any $m(n-1)$ independent residues of the differential
\begin{gather*}
v=y\,\d z
\end{gather*}
at the points $t_j^{(k)}$,
\begin{gather}\label{deflj}
\lambda_j^{(k)}={\rm res}|_{t_j^{(k)}} v, \qquad k=1,\dots,n-1
\end{gather}
(notice that due to the absence of the term $y^{n-1}$ in \eqref{fyz1} we have
$\sum_{k=1}^n {\rm res}|_{t_j^{(k)}} v=0$ for all $j=1,\dots,m$).

We shall assume that, according to \eqref{lmu},
\begin{gather}\label{lmu1}
\lambda_j^{(k)}=\mu_j^{(k)}-\mu_j^{(k-1)},
\end{gather}
and use the independent variables $\mu_j^{(k)}$, $j=1,\dots,m$, $k=1,\dots,n-1$.

Consider now a basis $\{a_\a,b_\a\}_{\a=1}^g$, $\bigl\{l_j^{(k)}, j=1,\dots, m,\, k=1,\dots,n\bigr\}$ in $H_1\bigl(\CC\setminus \bigl\{ t_j^{(k)}\bigr\},\Z\bigr)$, where~$l_j^{(k)}$ is a small loop around \smash{$t_j^{(k)}$} and $\{a_\a,b_\a\}$ form a symplectic basis in $H_1(\CC,\Z)$.
Consider the $a$-periods of the differential $v$,
\begin{gather}\label{defI}
I_\a=\int_{a_\a} v,\qquad j=1,\dots,g,
\end{gather}
$2\pi {\rm i} \lambda_j^{(k)}$ are periods of $v$ over $l_j^{(k)}$ \eqref{deflj}.
The set of variables $I_j$ \eqref{defI} and $\mu_j^{(k)}$ \eqref{deflj} gives local coordinates on the space of curves \eqref{fyz1}. The full set of local coordinates on the space $\Scal$ \eqref{SP1} is then obtained by adding
a vector $q\in
{\C^g}$ and the variables \smash{$\rho_j^{(k)}$, $j=1,\dots,m$, $k=1,\dots,n-1$} such that
\begin{gather*}
\log r_j^{(k)}=\rho_j^{(k)}-\rho_j^{(k-1)},
\end{gather*}
where $R_j=\operatorname{diag}\bigl(r_j^{(1)},\dots,r_j^{(n)}\bigr)$.

The full set of local coordinates on the space $\Scal$ can be chosen as follows:
\begin{gather}\label{fullco}
\bigl\{ \{q_\a, I_\a \}_{\a=1}^g, \, \bigl\{\mu_j^{(k)}, \rho_j^{(k)}\bigr\}, \, j=1,\dots,m, \, k=1,\dots, n-1\bigl\}.
\end{gather}

The symplectic form on $\Scal$ is then defined as follows:
\begin{gather}\label{oS}
\omega_\Scal=\sum_{\a=1}^g \delta I_\a\wedge \delta q_\a + \sum_{j=1}^m \sum_{k=1}^{n-1}\delta \rho_j^{(k)} \wedge \delta \mu_j^{(k)},
\end{gather}
i.e., $(I_\a, q_\a)$ and $\big(\rho_j^{(k)} , \mu_j^{(k)}\big)$ form canonical pairs of Darboux coordinates.

The Poisson brackets given by the inverse of the form \eqref{oS} look as follows:
\begin{gather*}
\bigl\{I_\a, q_\a\bigr\}_{\Scal}=1,\qquad \bigl\{\rho_j^{(k)} , \mu_j^{(k)}\bigr\}_{\Scal}=1
\end{gather*}
for $\a=1,\dots,g$, $j=1,\dots,m$, $k=1,\dots,n-1$.
All other Poisson brackets are vanishing.

\section{Direct and inverse spectral transform}

\subsection{Direct spectral transform}
\label{DST}

Choose a base point $z_0\in\C$
and a representative $\bigl\{G_j,L_j\bigr\}$ in the equivalence class representing a point of $\Acal$ such that the
matrix
\begin{gather*}
A(z)=\sum_{j=1}^m \f{G_j L_j G_j^{-1}}{z-t_j}
\end{gather*}
satisfies the condition that $A(z_0)$ is diagonal.

Define now the spectral curve $\CC$ by the equation
\begin{gather}\label{charpo1}
 \det \le(A(z)-y\1\ri)=0.
\end{gather}
Denote by $\Dcal$ the locus of codimension 1 in $\Acal$ defined by the condition that the spectral curve~\eqref{charpo1} is non-smooth or reducible (the locus $\Dcal$ depends on positions $t_j$ of the poles).

Let us now define the family of the spectral maps
\begin{gather*}
F^{{\mathfrak t}}\colon \ \Acal\setminus \Dcal \to \Scal
\end{gather*}
 which depends on positions of
poles $t_j$. This map is labelled by the Torelli marking $\mathfrak t$ of the spectral curve.

Now the spectral data (i.e., a point of the space $\Scal$) can be constructed as follows.

\subsubsection[Meromorphic functions Q\_j]{Meromorphic functions $\boldsymbol{Q_j}$}

The standard counting shows that the genus of the algebraic curve \eqref{charpo1}, assuming that it is non-singular and irreducible, is given by \eqref{genus1} and the number $p$ of the branch points (counting with multiplicities) is given by \eqref{bp}.
The symmetric polynomials of $A(z)$ then give the rational functions $Q_j$ in \eqref{SP}.
Since $A(z) = O\bigl(z^{-2}\bigr)$ as $z\to\infty$ (this is due to the condition $\sum_{j=1}^m A_j=0$), we see that
$Q_j$ has zero of degree $2j$ at $z\to\infty$, and, therefore, the differential $Q_j(\d z)^j$ is holomorphic at $z=\infty$, as required in the definition of $\Scal$.

The Torelli marking ${\mathfrak t}$ of $\Scal$ is determined by the labeling of the map $F^{{\mathfrak t}}$.

\subsubsection[Vector q in C\^{}g]{Vector $\boldsymbol{q\in \C^g}$}
Introduce now the adjugate matrix $M(z,y)$ of the degenerate matrix $A(z)-yI$ (remind that~$M$ is the transposed matrix of cofactors of $A(z)-yI$).

Denote the first column of $M(z,y)$ by $\psi(x)$, which is the meromorphic vector-valued function on $\CC$.
Moreover, it is an eigenvector of $A$,
\begin{gather*}
A(z)\psi(x)=y \psi(x),
\end{gather*}
where $x=(z,y)\in \CC$.
The poles of $\psi(x)$ have multiplicity $n-1$. They are at the points $t_j^{(k)}$ for
$j=1,\dots,m$ and $k=1,\dots,n$. Therefore, the total number of poles of each component of $\psi$ on~$\CC$ is equal to $n(n-1)m$.

Moreover, each component of $\psi$ has (due to the condition $\sum A_j=0$) zeros of multiplicity
$2(n-1)$ at each $\infty^j$.

Therefore, the divisor of the $k$th component $\psi_k$ of $\CC$ can be written as
\begin{gather*}
 (\psi_k)=-(n-1)\sum_{j=1}^m\sum_{k=1}^n t_j^{(k)} + 2(n-1)\sum_{k=1}^n \infty^{(k)} + D_k,
\end{gather*}
where $D_k$ is a positive divisor. The degree of $D_k$ thus equals to
\begin{gather*}
\deg D_k=n(n-1)m-2n(n-1)=n(n-1)(m-2)=p=2(g+n-1).
\end{gather*}

The structure of the positive divisor $D_k$ can be further clarified by considering the different normalization of the eigenvector function. Namely, define
\begin{gather}\label{phipsi}
\phi(x)=\frac{\psi(x)}{\psi_1(x)}.
\end{gather}
Then the divisor of the $k$th component can be written as follows:
\begin{gather*}
(\phi_k)=D_k-D_1.
\end{gather*}
It would seem at first that $\phi_k$ is a meromorphic function on $\CC$ of degree
$2(g+n-1)$. However, as we are going to show below, in fact, we have
\begin{gather*}
{\rm deg}\, \phi_k= g+n-1,
\end{gather*}
i.e., the divisor $D_k$ can be represented as a sum of two positive divisors,
\begin{gather}\label{dkt}
D_k= \tilde{D}_0+ \tilde{D}_k,
\end{gather}
where $ \tilde{D}_0$ is independent of $k$, and ${\rm deg} \, \tilde{D}_0={\rm deg }\, \tilde{D}_k=g+n-1$.

Consider the determinant
\begin{gather*}
F(z)=\det \bigl(\phi(z,y_1),\dots,\phi(z,y_n)\bigr)^2.
\end{gather*}
The function $F(z)$ can be considered as a rational function of $z$ on the base $\C$. Its poles have {\it twice} the multiplicity of the pole divisor of the vector $\phi$ on $\CC$. Thus, we can count the degree of the poles of $\phi$ by dividing by two the degree of poles of $F$ on $\C$.
To compute the degree of the polar divisor of $F$ on $\C$, it suffices to compute the degree of its zero divisor.

All zeros of $F(z)$ are at the projections on the base $\C$ of the ramification points on $\CC$ and nowhere else because the vectors $\varphi(z,y_1),\dots, \varphi(z,y_n)$ are linearly independent whenever the eigenvalues are distinct.
Therefore, the number of these zeros equals $p$, and hence the number of poles of $\phi$ on $\CC$ equals $p/2=n(n-1)(m-2)$.

Let us show now that poles of $F$ on $\C$ coincide with the $z$-projection of the poles of some component
of $\phi(x)$ (one needs to show that if $\phi$ has a simple pole at some $x_p=(z_p,y_p) $, then also~$F(z)$ has a double pole $z=z_p$).
According to~\eqref{phipsi}, poles of $\phi$ appear only from zeros of~$\psi_1(x)$. Denote such zero
(assume for simplicity that this zero is simple) by $p=(z_p,y_p)$ and assume that all other components of $\psi$ do not vanish simultaneously at $p$ with the same degree.

Consider the vector $\phi_p= (z-z_p)\phi(x)|_{x=p}$. Values of $\phi(x)$ at all other $n-1$ points
having the same projection $z_p$ on $\C$ are eigenvectors $\phi_p^l$ of the matrix $A(z_p)$ with different eigenvalues. Therefore, the vectors $\phi_p$, $\phi_p^1$, $\dots$, \smash{$\phi_p^{(l-1)}$} are linearly independent and $ F(z) \sim (z-z_p)^{-2} \bigl(C+ O(z-z_p)\bigr)$, thus having pole of the same first order at $z_p$ as $\phi(x)$ itself.

Concluding, the degree of each component $\phi_k$ can not exceed $g+n-1$ while the union of divisors of poles of all components of $\phi$ has degree equal to $g+n-1$, leading to the splitting~\eqref{dkt}.

Since $\phi(x)$ is the eigenvector of $A(z)$ corresponding to the point $x=(z,y)\in \CC$,
\begin{gather*}
A(z) \phi(z,y) = y\phi(z,y),
\end{gather*}
its values $\psi(z,y_j)$, $j=1,\dots,n$ on different sheets of $\CC$ give the full set of eigenvalues of $A(z)$,
\begin{gather*}
A(z) \phi(z,y_j) = y_j(z)\phi(z,y_j).
\end{gather*}

We recall that $\phi_1(x)=1$ by our normalization \eqref{phipsi} for the first component. Since $A(z_0)$ is diagonal and denoting by \smash{$z_0^{(j)}, \ j=1,\dots, n$} the points above $z_0$ on the spectral curve (corresponding to the distinct eigenvalues of $A(z_0)$), it follows that
the eigenvectors \smash{$\psi\bigl(z_0^{(j)}\bigr)$} are proportional to the elementary column vectors \smash{${\rm e}_j=(0,\dots,1,\dots,0)^{\rm T}$} (with $1$ in the $j$th place).
It then follows (due to the first component of $\psi$ in the denominator of $\phi$), that for $z\to z_0$ and $j=2,\dots, n$ we have the asymptotic behaviour
\begin{gather*}\label{phie}
\phi \bigl(z^{(j)} \bigr)= {\rm const} \frac{{\rm e}_j}{z-z_0}+\cdots .
\end{gather*}
Therefore, out of the $g+n-1$ poles of $\phi(x)$ there are
$n-1$ ``trivial'' poles at $z_0^{(2)},\dots,z_0^{(n)}$. Denote the complementary degree $g$ positive divisor by $\hat{D}$,
\begin{gather*}
\hat{D}= {\rm (poles\ of\ } \phi)- \sum_{j=2}^n z_0^{(j)}.
\end{gather*}

Finally, consider the canonical polygon $\hat{C}$ corresponding to the basis of cycles $\{a_j,b_j\}$ and define the vector $q$ as follows:
\begin{gather*}
q=\Ucal_x(\hat{D})-\Acal_x\bigl(z_0^{(1)}\bigr)-K^x,
\end{gather*}
where $\Ucal_x$ is the Abel map, and $K^x$ is the vector of Riemann constants with initial point $x$ corresponding to the same canonical basis.

\subsubsection[Diagonal matrices R\_j (the ``toric variables'')]{Diagonal matrices $\boldsymbol{R_j}$ (the ``toric variables'')}

The choice of matrices $R_j$ is not unique: one can choose an arbitrary reference set $G_j^0$ of eigenvectors of each \smash{$A_j=G_j L_j G_j^{-1}$}
and then define $R_j$ via $G_j=G_j^0 R_j$. The choice of $G_j^0$ which leads to matrices $R_j$ which are natural from the point of view of Poisson geometry is based on relationship of eigenvectors of matrix
$A(z)$ \cite{Borot_Eynard} with solution of the matrix Riemann--Hilbert problem with quasi-permutation monodromies~\cite{Annalen}.

First, introduce the set of generators $\g_1,\dots,\g_p$ of $\pi_1\bigl(CP^1\setminus \bigl\{z_j\bigr\}_{j=1}^p,\, z_0\bigr)$ satisfying the
relation $\g_1\cdots\g_p={\rm id}$. Consider the homomorphism $h$ of this fundamental group to the symmetric group~$S_n$ defined by the covering $\Scal$. Then, since all branch points $z_j$ are assumed to be simple, $h\bigl(\g_j\bigr)$ is a simple permutation of two sheets, which we denote by $l_j$ and $k_j$, i.e., $h\bigl(\g_j\bigr)=\bigl(l_j,k_j\bigr)$.
 Denote the corresponding $n\times n$ permutation matrix by $\sigma_{l_j,k_j}$.

Define monodromy matrices representation by \cite[expression (4.47)]{RH1} (in notations of \cite{RH1} one needs to assume that all $p_\alpha=0$ and all \smash{$r_n^{(k)}=0$}. The monodromy matrices are then matrices of quasi-permutation with positions of non-vanishing entries coinciding with the ones of $\sigma_{l_j,k_j}$). Notice that the branch points $z_j$ are denoted by $\lambda_j$ in~\cite{RH1}.

Denote the solution of the Riemann--Hilbert problem with such monodromies normalized by $\Psi(z_0)=I$ by $\Psi$; this solution is given explicitly by \cite[Theorem 4.7]{RH1}. Define now
\begin{gather}\label{GPsi}
G_j^0=\Psi(t_j).
\end{gather}

According to results of \cite{Borot_Eynard}, the matrix $\Psi(z)$ diagonalizes the matrix $A(z)$. Therefore, matrices~$G_j^0$ diagonalize the matrices $A_j$.

Then matrices $G_j$ from the definition of the space $\Acal$ \eqref{Acal}, which form an arbitrary set of matrices diagonilizing $A_j$,
are related to $G_j^0$ by multiplication with some diagonal matrices $R_j\in {\rm SL}(n)$,
\begin{gather*}
G_j= G_j^0 R_j
\end{gather*}
The matrices $R_j$ are the toric variables from the space $\Scal$ \eqref{SP1}.

\subsection{Inverse spectral transform}

The inverse spectral transform $H$ from $\Scal\setminus (\Theta)$ to $\Acal$ is defined by choosing a basepoint $z_0\in C$. We check {\it a posteriori} that $H$ is independent of the choice of $z_0$.

Starting from a point of the space $\Scal$, we define the curve $\CC$ by equation \eqref{fyz1}, denote by $\pi$ the natural projection $\CC\to C$.
Introduce a fundamental polygon $\Ch$ of $\CC$ and
consider the Szeg\H{o}
kernel $S_q(x,y)$ \eqref{Szegodef}.
We shall denote by $z^{(k)}$ the point on $k$th sheet of $\CC$.

Consider a point $z_0\in C$ and its neighbourhood $D\subset C$ such that no branch point of $\CC$ projects in $D$.
Then
\begin{gather*}
\pi^{-1}(D)= D^{(1)}\cup\dots \cup D^{(1)},
\end{gather*}
where $D^{(j)}\subset\CC$ is a disk on $j$th sheet of $\CC$.
Assuming that $q\not\in (\Theta)$, i.e., $\Theta(q)\neq 0$, consider the following function (in the fundamental polygon $\Ch$ of $\Ccal$), which is holomorphic with respect to $x$ for $x\in \Ch$:
\begin{gather}\label{psidef}
\psi(x,x_0)=S_{q}(x,x_0) E_0\bigl(\pi(x),\pi(x_0)\bigr),
\end{gather}
where $x_0, x\in \CC $, $z=\pi(x)$, $z_0=\pi(x_0)$ and $E_0(z,z_0)$ is the prime-form on $C$,
\begin{gather*}
E_0(z,z_0)= \f{z-z_0}{\sqrt{\d z}\sqrt{\d z_0}}
\end{gather*}
On $\CC$ $\psi(x,x_0)$ has non-trivial holonomies with respect to $x$ along $b$-cycles (we consider $x$ to be the argument of $\psi(x,x_0)$ and $x_0$ as the basepoint), given by ${\rm e}^{2\pi {\rm i} q_j}$.

Let us now fix some sheet, say sheet 1, and define the following meromorphic function of degree $g+1$ on $\CC$:
\begin{gather*}
\varphi(x,x_0)= \f{\psi(x,x_0)}{\psi(x,z_0^{(1)})}=\f{S_{q}(x,x_0)}{S_{q}(x,z^{(1)})}.
\end{gather*}
Remind that $z_0=\pi(x_0)$. For $x_0\neq z_0^{(1)}$ $\varphi(x,x_0)$ has a pole at the pole of the numerator (i.e., at $x=x_0$) and $g$ poles at zeros of the denominator, i.e., points, where $\theta\bigl(A(x)-A\bigl(z_0^{(1)}\bigr)-q\bigr)=0$.

Consider also the column vector $\Psi$ given by
\begin{gather*}
\Psi(x,z_0)=\bigl(\psi\bigl(x,z_0^{(1)}\bigr),\dots, \psi\bigl(x,z_0^{(n)}\bigr)\bigr)^{\rm T}
\end{gather*}
and the corresponding column vector meromorphic function
\begin{gather*}
\Phi(x,z_0)=\f{\Psi(x,z_0)}{\psi\bigl(x,z_0^{(1)}\bigr)} = \bigl(\varphi\bigl(x,z_0^{(1)}\bigr),\dots, \varphi\bigl(x,z_0^{(n)}\bigr)\bigr)^{\rm T}
 = \left(1,\f{S_{q}\bigl(x,z_0^{(2)}\bigr)}{S_{q}\bigl(x,z_0^{(1)}\bigr)} ,\dots,
\f{S_{q}\bigl(x,z_0^{(n)}\bigr)}{S_{q}\bigl(x,z_0^{(1)}\bigr)}\right)^{\rm T}\!.
\end{gather*}

From the vectors $\Psi$ and $\Phi$, we construct also the matrices $\Psih$ and $\Phih$ via
\begin{gather}\label{psih}
\Psih(z,z_0)=\bigl(\Psi\bigl(z^{(1)},z_0\bigr),\dots,\Psi\bigl(z^{(n)},z_0\bigr)\bigr)
\end{gather}
and
\begin{gather}\label{phih}
\Phih(z,z_0)=\bigl(\Phi\bigl(z^{(1)},z_0\bigr),\dots,\Phi\bigl(z^{(n)},z_0\bigr)\bigr).
\end{gather}

These two matrices are related by the diagonal matrix
\begin{gather*}
\Phih(z,z_0)=\Psih(z,z_0) \operatorname{diag}\bigl( \psi^{-1}\bigl( z^{(1)},z_0^{(1)}\bigr), \dots, \psi^{-1}\bigl( z^{(n)},z_0^{(1)}\bigr) \bigr).
\end{gather*}
Define also the diagonal matrix
\begin{gather}\label{Lz}
Y(z)=\operatorname{diag}\bigl(y^{(1)}(z),\dots,y^{(n)}(z)\bigr).
\end{gather}

Now we are in position to formulate the following proposition (this statement is contained in~\cite[Corollary 3.3]{Borot_Eynard}; here we give an independent proof).
\begin{Proposition}
The matrix $A(z)$ defined by
\begin{gather}\label{AzP}
A(z)=\Psih(z) Y(z) \Psih(z)^{-1}
\end{gather}
or, equivalently, by
\begin{gather*}
A(z)=\Phih(z) Y(z) \Phih(z)^{-1},
\end{gather*}
where $\Psih$ is given by $\eqref{psih}$, $\Phih$ is given by $\eqref{phih}$ and $Y$ is given by $\eqref{Lz}$,
is a meromorphic function of $z$ with simple poles at $z_j$, i.e., it can be represented in the form
\begin{gather*}
A(z)=\sum_{j=1}^m\f{A_j}{z-t_j}
\end{gather*}
with
\begin{gather*}
A_j=G_j^0 L_j \bigl(G_j^0\bigr)^{-1},
\end{gather*}
where
\begin{gather*}
L_j={\rm res}|_{t_j} Y ,\qquad j=1,\dots,m
\end{gather*}
and $G_j^0$ given by
\begin{gather}\label{G0def}
(G_j^0)_{ab}=\psi\bigl(t_j^{(a)}, z_0^{(b)}\bigr),
\end{gather}
where $\psi$ is defined in terms of Szeg\H{o} kernel by $\eqref{psidef}$.
\end{Proposition}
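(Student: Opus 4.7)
The plan is to establish three properties of $A(z) := \Psih(z) Y(z) \Psih(z)^{-1}$: single-valuedness on $\C \setminus \{t_j\}_{j=1}^m$, the correct decay at infinity, and the simple-pole residue formula at each $t_j$.

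Single-valuedness follows from the coordinated transformation of $\Psih$ and $Y$ under the two sources of multivaluedness. When $z$ traces a small loop around a branch point of $\pi\colon \CC \to \C$ that exchanges sheets $k$ and $k'$, both matrices transform by the transposition $\sigma = (k\, k')$: the diagonal entries of $Y$ swap, while the $k$-th and $k'$-th columns of $\Psih$ swap (the column index of $\Psih$ being attached to $z^{(k)}$). Writing this as $Y \mapsto P_\sigma^{-1} Y P_\sigma$ and $\Psih \mapsto \Psih P_\sigma$, the conjugation $\Psih Y \Psih^{-1}$ is visibly preserved. When $z$ instead traces a loop under which some $z^{(k)}$ is carried along a non-trivial cycle in $H_1(\CC, \Z)$, the $k$-th column of $\Psih$ acquires a constant factor from the holonomy of the Szeg\H{o} kernel in its first argument (trivial along $a$-cycles, equal to $\mathrm{e}^{2\pi \mathrm{i} q_\alpha}$ along $b_\alpha$). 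This amounts to $\Psih \mapsto \Psih D$ with $D$ diagonal, and since $D$ commutes with the diagonal $Y$, it leaves $A(z)$ invariant.

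For the singularity structure, note that $\Psih(z)$ is holomorphic in $z$ on $\C$ and generically invertible by the hypothesis $q \notin (\Theta)$, whereas $Y(z)$ has simple poles at the $t_j$ with residues $L_j = \operatorname{diag}(\lambda_j^{(1)}, \dots, \lambda_j^{(n)})$ by the defining property $\lambda_j^{(k)} = \operatorname{res}|_{t_j^{(k)}} v$. Expanding near $t_j$ yields
\[
A(z) = \frac{\Psih(t_j)\, L_j\, \Psih(t_j)^{-1}}{z - t_j} + O(1),
\]
and identifying $G_j^0$ with $\Psih(t_j)$, whose entries are $\psi(t_j^{(k)}, z_0^{(\ell)})$ as in the proposition, gives the claimed residue $G_j^0 L_j (G_j^0)^{-1}$. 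At infinity, the characteristic equation $f(y,z) = 0$ together with the condition that $Q_j(\mathrm{d}z)^j$ be holomorphic at $\infty$ forces $y^{(k)}(z) = O(z^{-2})$, hence $Y(\infty) = 0$ and consequently $A(z) = O(z^{-2})$. Combined with the simple-pole structure at the $t_j$ and Liouville's theorem, this produces the partial-fraction representation $A(z) = \sum_{j=1}^m A_j/(z - t_j)$. The equivalent formula with $\Phih$ follows at once from $\Phih(z) = \Psih(z)\, D'(z)$ with $D'$ diagonal, hence commuting with $Y$.

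The main obstacle is verifying that $A(z)$ extends holomorphically across the branch points of $\CC$, where $\Psih$ becomes singular because two of its columns coalesce. By the single-valuedness argument $A(z)$ has no monodromy there, so its only possible singularity would be a pole; one must check that in fact no pole arises, i.e., that the apparent blow-up of $\Psih^{-1}$ from the vanishing determinant of $\Psih$ is cancelled exactly by the coalescence of the corresponding diagonal entries of $Y$. This reduces to a $2 \times 2$ block computation using a local uniformizer at the branch point, comparing the $O(\sqrt{z - z_{\mathrm{br}}})$ behaviour of the difference of merging columns of $\Psih$ with that of the merging eigenvalues of $Y$; such a cancellation is standard in the matrix Baker--Akhiezer formalism.
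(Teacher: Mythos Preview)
Your proof is correct and follows the same line as the paper's, which argues single-valuedness via the permutation action at branch points and attributes the simple poles of $A$ to those of $y$. You are in fact more thorough than the paper's brief proof: you also treat the holonomy along $b$-cycles, the decay at $\infty$, and the regularity at the branch points where $\det\Psih$ vanishes, none of which the paper addresses explicitly. The only step you leave implicit is that $\Psih$ and $\Psih^{-1}$ remain bounded as $z\to\infty$ (needed for the conclusion $A(z)=O\bigl(z^{-2}\bigr)$ from $Y(z)=O\bigl(z^{-2}\bigr)$); this follows from the inversion identity $\Psih^{-1}(z,z_0)=\Psih(z_0,z)$ stated just after the proposition, together with the regularity of $\psi(x,x_0)$ as either argument tends to one of the points $\infty^{(k)}$.
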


\begin{proof} To show that $A(z)$ is a function on the base (i.e., a single-valued function of $z$) consider the analytic continuation of $A(z)$ along a path going around some branch point $z_j=\pi(x_j)$. Then some columns (say, number $l$ and $k$) of $\Psih$ interchange, i.e., it
transforms as $\Psih\to \Psih \Pi_{kl}$, where $\Pi_{kl}$ is the permutation matrix. Simultaneously, the diagonal entries $l$ and $k$ of $Y$ also get
interchanged, i.e., it transforms as $Y\to \Pi_{kl} Y\Pi_{kl}$ (notice that $\Pi_{kl}^2=I$). Thus, the matrix $A(z)$ defined by
\eqref{AzP}, remains invariant, and is a function of $z$. Simple poles of $A(z)$ at $t_j$ arise due to simple poles of function $y$ at the points $t_j^{(a)}$.
\end{proof}

Now we define the inverse spectral transform $H$ as follows.

\begin{Definition}
For a given point of the space $\Scal$ \eqref{SP1}, consider the curve $\CC$ \eqref{fyz1} and let
 $L_j$ be given by \eqref{Lz}. Define also
\begin{gather}\label{Gj}
G_j=G_j^0 R_j,\qquad j=1,\dots,m,
\end{gather}
where $G_j^0$ are defined by \eqref{G0def}.
\end{Definition}

The self-consistency of this definition is proven in the next proposition.

\begin{Proposition}
The set $\bigl\{G_j,L_j\bigr\}_{j=1}^m$ defines an element of the space $\Acal$, in particular,
\begin{gather}\label{GL}
\sum_{j=1}^m G_j L_j G_j^{-1}=0.
\end{gather}
\end{Proposition}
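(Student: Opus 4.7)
The plan is to reduce the identity to a statement about the asymptotic behavior of $A(z)$ at $z=\infty$, and then to read off that behavior from the formula $A(z)=\Psih(z)Y(z)\Psih(z)^{-1}$ of the previous proposition.

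First I would observe that the toric matrices $R_j$ play no role in \eqref{GL}. Indeed, since both $R_j$ and $L_j$ are diagonal, they commute, so
\begin{gather*}
G_j L_j G_j^{-1}=G_j^0 R_j L_j R_j^{-1} \bigl(G_j^0\bigr)^{-1}=G_j^0 L_j \bigl(G_j^0\bigr)^{-1}=A_j.
\end{gather*}
Thus \eqref{GL} is equivalent to $\sum_{j=1}^m A_j=0$. By the previous proposition, $A(z)=\Psih(z)Y(z)\Psih(z)^{-1}$ is rational in $z$ with simple poles at $t_j$ whose residues are the matrices $A_j$. Hence $A(z)=\sum_{j=1}^m A_j/(z-t_j)+P(z)$ for some polynomial $P$, and the desired identity is equivalent to $A(z)=O\bigl(z^{-2}\bigr)$ as $z\to\infty$: the absence of polynomial part kills $P$, and the vanishing of the $z^{-1}$ coefficient is precisely $\sum_j A_j=0$.

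Next I would establish the two ingredients needed to control $A(z)$ at infinity. For the eigenvalue matrix $Y(z)$, the defining equation \eqref{fyz1} together with the assumption that $Q_j(z)(\d z)^j$ is holomorphic at $z=\infty$ forces $Q_j(z)=O\bigl(z^{-2j}\bigr)$; the dominant balance in the characteristic equation then yields $y^{(k)}(z)=O\bigl(z^{-2}\bigr)$ on every sheet, so that $Y(z)=O\bigl(z^{-2}\bigr)$. For the matrix $\Psih(z)$, I would work in the local coordinate $w=1/z$ near each $\infty^{(k)}$ and track the half-differentials: the factor $(z(x)-z(x_0))/\sqrt{\d z(x)}$ in \eqref{psihint} contributes a single power of $z$ in the numerator that is exactly cancelled by the behaviour of $\sqrt{\d z(x)}\sim z\sqrt{\d w}$ near $\infty^{(k)}$, while the Szeg\H{o} kernel $S_q(x,x_0)$, viewed as a section of a half-differential bundle in $x$, extends holomorphically through $\infty^{(k)}$. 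Consequently each entry $\Psih_{\ell k}(z)=\psi\bigl(z^{(k)},z_0^{(\ell)}\bigr)$ admits a finite limit $\Psih_{\ell k}(\infty)=\psi\bigl(\infty^{(k)},z_0^{(\ell)}\bigr)$.

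Combining these, $A(z)=\Psih(z)Y(z)\Psih(z)^{-1}$ is $O\bigl(z^{-2}\bigr)$ provided $\Psih(\infty)$ is invertible, and then \eqref{GL} follows. The main obstacle is therefore the invertibility of the limit matrix $\Psih(\infty)=\bigl[\psi\bigl(\infty^{(k)},z_0^{(\ell)}\bigr)\bigr]_{\ell,k}$. This is the ``generic position'' statement: if $\Psih(\infty)$ were degenerate, some nonzero vector $c\in\C^n$ would make $\sum_k c_k\,\psi\bigl(\infty^{(k)},z_0^{(\ell)}\bigr)=0$ for all $\ell$, producing an unexpected meromorphic section of the line bundle associated with $q$ whose divisor forces $q$ onto a codimension-one subvariety of the Jacobian. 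Excluding the theta divisor $(\Theta)$ (and, if necessary, a further proper subvariety on which the evaluations at $\infty^{(k)}$ degenerate, which is already implicit in staying off $\Dcal$ on the $\Acal$-side) secures invertibility, and the proof is complete.
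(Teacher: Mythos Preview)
Your argument is correct and follows the same idea as the paper: both reduce $\sum_j A_j=0$ to the regularity of $v=y\,\d z$ at the points $\infty^{(k)}$, which forces $A(z)=O\bigl(z^{-2}\bigr)$ (equivalently, $\res_\infty A(z)\,\d z=0$). The paper states this in one line, while you spell out the behaviour of $\Psih$ at $\infty$ and explicitly confront the invertibility of $\Psih(\infty)$; that extra care is justified, and in fact $\Psih(z)$ is nonsingular away from the branch points, so no further genericity beyond the standing assumptions is needed.
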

\begin{proof}
To prove \eqref{GL}, we notice that $\infty^{(j)}$ are regular points of the differential $v=y\,\d z$. Therefore,
${\rm res}|_{\infty} A(z)\,\d z=0$, which implies $\sum_{j=1}^m A_j=0$.
\end{proof}

In the sequel, we shall need the following lemma (see \cite[formula~(4.14)]{Annalen}).
\begin{Lemma}
We have
\begin{gather}\label{Psihi}
\Psih^{-1}(z,z_0)=\Psih(z_0,z)
\end{gather}
and
\begin{gather*}
\bigl((G^0_j)^{-1}\bigr)_{ab}=\psi\bigl(z_0^{(b)},t_j^{(a)}\bigr).
\end{gather*}
\end{Lemma}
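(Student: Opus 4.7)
The plan is to reduce the first identity \eqref{Psihi} to a Liouville-type argument: I show that the product $M(z,z_0):=\Psih(z,z_0)\,\Psih(z_0,z)$, viewed as a matrix-valued function of the base variable $z$, extends holomorphically to $\mathbb P^1$ and equals $I$ at the single point $z=z_0$, and is therefore identically the identity. The second identity then follows by specialising to $z=t_j$ and reading off entries via \eqref{G0def}.

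First I would write the $(\ell,m)$-entry explicitly as
\begin{gather*}
M_{\ell m}(z,z_0)=\sum_{k=1}^n \psi\bigl(z^{(k)},z_0^{(\ell)}\bigr)\,\psi\bigl(z_0^{(m)},z^{(k)}\bigr)
\end{gather*}
and observe that the sum over all $n$ preimages of $z$ on $\CC$ is invariant under the monodromy of $\CC\to\mathbb P^1$ permuting the labels $z^{(k)}$, while the $b$-cycle multipliers ${\rm e}^{\pm 2\pi {\rm i} q_j}$ coming from the two $\psi$-factors cancel; thus each entry of $M$ descends to a single-valued function of $z\in\mathbb P^1$. The only candidate singularities of the individual $\psi$-factors come from the simple diagonal pole of $S_q$, but the factor $E_0(\pi(x),\pi(y))=(\pi(x)-\pi(y))/\sqrt{\d\pi(x)\,\d\pi(y)}$ appearing in $\psi$ vanishes precisely where $\pi(x)=\pi(y)$ and cancels this pole. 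Using the diagonal asymptotics $S_q(x,y)\sim \sqrt{\d\pi(x)\,\d\pi(y)}/(\pi(x)-\pi(y))$ one extracts the two local identities $\psi(y,y)=1$ and $\psi(x',y)=0$ whenever $x'\neq y$ satisfies $\pi(x')=\pi(y)$.

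These two identities serve a double purpose. First, evaluating at $z=z_0$ gives $\Psih_{\ell k}(z_0,z_0)=\psi\bigl(z_0^{(k)},z_0^{(\ell)}\bigr)=\delta_{k\ell}$, so $\Psih(z_0,z_0)=I$ and hence $M(z_0,z_0)=I$. Second, they show that each $\psi$-factor is holomorphic in $z^{(k)}\in\CC$, so $M(z,z_0)$ is holomorphic in $z$ on $\mathbb P^1\setminus\{\infty\}$. A brief local expansion at infinity in the coordinate $\tau=1/z$---the linear growth of $\pi(x)-\pi(y)$ is balanced exactly by the pole of $1/\sqrt{\d\pi(x)}$, while $S_q$ is a bounded half-differential at each lift $\infty^{(k)}$---shows that the extension to $z=\infty$ is also holomorphic. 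By Liouville's theorem $M(z,z_0)\equiv I$, which proves \eqref{Psihi}; setting $z=t_j$ and reading off entries via \eqref{G0def} yields the formula for $(G_j^0)^{-1}$.

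The step I expect to require the most care is the verification of regularity at $z=\infty$: one must match the spinorial behaviour of $S_q$ at the lifts $\infty^{(k)}$ against the rescaling in $E_0$ on the base and observe that the sheet-sum absorbs any residual ambiguity. Every other step is a formal consequence of the reproducing structure of the Szeg\H{o} kernel together with the assumption $q\notin(\Theta)$ which keeps $S_q$ well-defined.
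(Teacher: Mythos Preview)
The paper does not actually prove this lemma; it only appends the citation ``see \cite[formula~(4.14)]{Annalen}''. Your Liouville-type argument is therefore a welcome self-contained substitute, and the overall strategy is correct: $M(z,z_0)=\Psih(z,z_0)\Psih(z_0,z)$ is single-valued on the base (sheet permutations act on the summation index, $b$-cycle multipliers cancel), equals $I$ at $z=z_0$, and is regular both over $z_0$ and at $\infty$.

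There is one genuine gap, though. You claim that ``each $\psi$-factor is holomorphic in $z^{(k)}\in\CC$'', but this fails at the ramification points $x_i$. The factor $E_0\bigl(\pi(x),z_0\bigr)$ contains $1/\sqrt{\d z(x)}$, and $\d z$, viewed as a $1$-form on $\CC$, has a simple zero at each $x_i$; in the local coordinate $\xi$ with $z-z(x_i)=\xi^2$ one finds $\psi(x,x_0)\sim \xi^{-1/2}$. Thus the individual columns of $\Psih(z,z_0)$ do blow up at branch points, and a single summand of $M_{\ell m}$ behaves like $h(\xi)/(2\xi)$ with $h$ holomorphic (the two $1/\sqrt{\d z}$ factors combining into $1/\d z$ against the $1$-form $S_q\cdot S_q$). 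What rescues $M$ is the sheet sum: the two colliding sheets contribute $\bigl(h(\xi)-h(-\xi)\bigr)/(2\xi)$, which is even in $\xi$ and hence holomorphic in $z$. Equivalently, $M$ is single-valued near $z(x_i)$ with growth at worst $|z-z(x_i)|^{-1/2}$, so the singularity is removable. Once you insert this paragraph, your argument is complete.

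As a remark, Lemma~\ref{lemma51}, which the paper proves later by an identical pole-counting idea, gives the off-diagonal vanishing $M_{\ell m}\equiv 0$ for $\ell\neq m$ in one line upon setting $x=z_0^{(m)}$, $y=z_0^{(\ell)}$ (since then $z(x)=z(y)=z_0$ and the right-hand side collapses); combined with Fay's identity \eqref{Fayid} and $\sum_k v_\a\bigl(z^{(k)}\bigr)=0$ on $\mathbb P^1$ this is a shorter route, though less self-contained than yours.
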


Finally, we formulate the following proposition.

\begin{Proposition}
The inverse spectral transform $H\colon\Scal\to \Acal$ is the inverse of the direct spectral transform $($depending on Torelli marking of the spectral curve ${\bf t})$ $F^{\bf t}\colon\Acal\to\Scal$,
\begin{gather*}
F^{{\bf t}}\circ H=H\circ F^{{\bf t}}={\rm id}.
\end{gather*}
\end{Proposition}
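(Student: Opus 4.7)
The plan is to verify the two composition identities $F^{\mathbf t}\circ H=\mathrm{id}_{\Scal}$ and $H\circ F^{\mathbf t}=\mathrm{id}_{\Acal}$ separately, matching each component of the spectral data in turn: the meromorphic coefficients $Q_j$ (equivalently the spectral curve $\CC$), the eigenvalue residues $\mu_j^{(k)}$, the vector $q\in\C^g$, and the toric matrices $R_j$. The Torelli marking $\mathfrak t$ is preserved tautologically since both maps carry it as a label.

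For $F^{\mathbf t}\circ H=\mathrm{id}_{\Scal}$: starting from a point of $\Scal$, the preceding proposition produces $A(z)=\Psih(z)Y(z)\Psih(z)^{-1}$ with simple poles at $t_j$ and residues $A_j=G_j^0 L_j(G_j^0)^{-1}$. Since $A(z)$ is similar to $Y(z)=\operatorname{diag}(y^{(1)},\dots,y^{(n)})$, its characteristic polynomial equals $\prod_k(y-y^{(k)})=f(y,z)$, so the $Q_j$ and hence the curve $\CC$ and the residues $\mu_j^{(k)}$ are recovered verbatim. For the toric part, the identity $G_j^0=\Psih(t_j)$ is precisely the distinguished reference frame from Section~\ref{DST} (by the Borot--Eynard diagonalization and the quasi-permutation Riemann--Hilbert construction of \cite{Annalen}); feeding $G_j=G_j^0 R_j$ into the direct transform therefore extracts the same $R_j$. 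The nontrivial check is the vector $q$: one inserts the explicit form \eqref{Szegodef} of $S_q$ into the normalized eigenvector $\Phi(x,z_0)$ and computes its pole divisor using the standard fact (Riemann's theorem applied to $\Theta(\Ucal(x)-\Ucal(x_0)+q)$) that the $g$ zeros of $S_q(\cdot,x_0)$ have Abel image $-q-K^{x_0}$ with basepoint $x_0$. After subtracting the trivial poles at $z_0^{(2)},\dots,z_0^{(n)}$, the remaining degree-$g$ divisor $\hat D$ satisfies $\Ucal_x(\hat D)-\Ucal_x(z_0^{(1)})-K^x=q$, exactly the defining formula of the direct transform.

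For $H\circ F^{\mathbf t}=\mathrm{id}_{\Acal}$: starting from $\{G_j,L_j\}\in\Acal\setminus\Dcal$ in the gauge making $A(z_0)$ diagonal, produce the spectral data $(\{Q_j\},q,\{R_j\})$ and feed it into $H$ to obtain $A'(z)=\Psih(z)Y(z)\Psih(z)^{-1}$. Then $A'$ and the original $A$ are two traceless rational matrices with simple poles at the same $t_j$, sharing the same characteristic polynomial (hence the same spectral curve $\CC$) and the same normalized eigenvector divisor on $\CC$ (by the $q$-matching argument above, using that the Szegő kernel is uniquely characterized by $q$ via \eqref{Szegodef}). Two such matrices coincide up to conjugation by a diagonal matrix acting on the eigenvector basis; the additional data $R_j=(G_j^0)^{-1}G_j$ fixes this residual diagonal freedom, yielding $A'=A$ as rational matrices and hence $\{G_j',L_j'\}=\{G_j,L_j\}$ in $\Acal$ (i.e., modulo the simultaneous left $\mathrm{SL}(n)$-action). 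The lemma $\Psih^{-1}(z,z_0)=\Psih(z_0,z)$ together with $(G_j^0)^{-1}_{ab}=\psi(z_0^{(b)},t_j^{(a)})$ is needed to invert $\Psih(t_j)$ cleanly when comparing $G_j^0$ on the two sides.

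The main obstacle is the $q$-matching step: one must carefully identify the degree-$g$ ``dynamical'' component $\hat D$ of the eigenvector pole divisor obtained on the $\Acal$-side (through the adjugate of $A(z)-y\mathbf{1}$, the splitting $D_k=\tilde D_0+\tilde D_k$ of \eqref{dkt}, and the removal of the trivial poles at $z_0^{(j)}$) with the pole divisor on the $\Scal$-side (coming from the zeros of $\Theta(\Ucal(x)-\Ucal(z_0^{(1)})+q)$ in the Szegő-kernel normalization of $\Phi$). Both computations reduce to Riemann's vanishing theorem for the theta function and to tracking the basepoint shift by the vector of Riemann constants $K^x$; once the two divisors are identified as equal positive divisors of the same degree $g$, the injectivity of the Abel map on general positive divisors of degree $g$ (away from the theta divisor) concludes the argument.
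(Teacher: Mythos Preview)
Your proposal is correct and uses the same key ingredient as the paper's proof: the identification of the reference frames $G_j^0$ on the two sides via the explicit solution of the quasi-permutation Riemann--Hilbert problem in \cite{RH1,Annalen}. The paper's own argument is extremely terse; it declares that the \emph{only} non-trivial part is the matching of the toric variables $R_j$ (equivalently, that the $G_j^0$ of \eqref{GPsi} and of \eqref{G0def} coincide) and refers to \cite{RH1,Annalen} for that identification, treating the recovery of $\{Q_j\}$, $\mu_j^{(k)}$ and $q$ as implicit in the construction.

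Your write-up differs mainly in emphasis: you single out the $q$-matching as the ``main obstacle'' and supply an explicit argument via Riemann's vanishing theorem, identifying the dynamical pole divisor $\hat D$ of the normalized eigenvector with the zero divisor of $S_q(\cdot,z_0^{(1)})$. The paper does not spell this out, presumably regarding it as part of the standard theory of the Szeg\H{o} kernel as reproducing kernel for the line bundle determined by $q$. Your addition makes the argument more self-contained; just be careful with the sign and basepoint conventions when applying Riemann's theorem so that the formula $q=\Ucal_x(\hat D)-\Ucal_x\bigl(z_0^{(1)}\bigr)-K^x$ comes out with the signs matching the paper's definition of $S_q$ in \eqref{Szegodef}.
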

\begin{proof} The non-trivial part is to check that the toric variables $R_j$ forming the part of the space of spectral data $\Scal$. are the same as the ones appearing in the coordinatization of the space $\Acal$. Equivalently, one needs to check that the reference sets of eigenvectors defined on the $\Acal$-side by~\eqref{GPsi} and on the $\Scal$-side by~\eqref{G0def}, are the same. This follows from construction of \cite{RH1,Annalen}
of the solution of Riemann--Hilbert problem with quasi-permutation monodromies.
\end{proof}

\section{Variations of Szeg\H{o} kernel and summation over sheets}
\label{secvarSz}

Here we state several properties of the Szeg\H{o} kernel needed in the analysis
of symplectic properties of spectral transform.

The kernel $S_q$ satisfies the following identity due to Fay \cite{Fay73}:
\begin{gather}
\label{Fayid}
S_q(x,y) S_q(y,x)= -B(x,y)- \sum_{\a,\b=1}^g \p_\a\p_\b \log \Theta(q) v_\a(x) v_\b(y),
\end{gather}
where $B(x,y)=\d_x \d_y\log E(x,y)$ is the canonical bimeromorphic differential.

The following formula describes the variation of $S_q$ with respect to $q_\alpha$,
$\a=1,\dots,g$, \cite[Proposition~1]{KalKor}:
\begin{gather*}
\f{\delta}{\delta q_\gamma}S_q(x,y)=-\int_{t\in a_\gamma} S_q(x,t) S_q(t,y)
\end{gather*}

Another fact we need is the following
variational formula for Szeg\H{o} kernel with respect to $I_\a=\int_{a_\a}v$ and residues of $v$ at $t_j^{(k)}$
(formulas of this type appeared in the theory of tau-functions of Witham hierarchies \cite{Krich92} as well
as in the theory of deformations of the spectral curve of the Hitchin's systems \cite{Balduzzi,DonMar}).

\begin{Proposition}\label{varSz}
Variation of the Szeg\H{o} kernel with respect to period coordinates $I_\a=\int_{a_\a}$ and coordinates
$\mu_j^{(k)}$ $($related to residues of $v$ via $\eqref{lmu1})$ on the moduli space of spectral covers is given by the following sum of residues at the branch points:
\begin{gather}\label{SI}
\f{\delta}{\delta I_\a}S_q(x,y)\Big|_{z(x),z(y)}=-\f{\pi {\rm i} }{2}\sum_{j=1}^{p} {\rm res}|_{t=x_j}\f{v_\a(t)\,W_t\bigl[S_q(x,t),\, S_q(t,y)\bigr]}{\d z(t)\,\d y(t)}
\end{gather}
and
\begin{gather}\label{SI0}
\f{\delta}{\delta \bigl( 2\pi {\rm i} \mu_j^{(k)}\bigr)}S_q(x,y)\Big|_{z(x),z(y)}=
-\f{\pi {\rm i} }{2}\sum_{j=1}^{p} {\rm res}|_{t=x_j}\f{w_{t_j^{(k)},t_j^{(k-1)}}(t)\,W_t\bigl[S_q(x,t),\, S_q(t,y)\bigr]}{\d z(t)\,\d y(t)},
\end{gather}
where the $z$-coordinates of the points $x$ and $y$ are kept constant under differentiation and $W_t$ is the Wronskian determinant with respect to variable $t$, $p=n(n-1)(m-2)$ $\eqref{bp}$ is the number of branch points.
Here $w_{x,y}(t)$ denotes the differential of the third kind with residues $\pm 1$ at $x$ and $y$, $w_{x,y}(t)$ is normalized by condition
$\int_{t\in a_\a} w_{x,y}(t)=0$ where it is assumed that the choice of the cycle $a_\a$ in $H_1(\CC\setminus \{x,y\})$ is the same as the one used in the computation of the period $I_\a$.
\end{Proposition}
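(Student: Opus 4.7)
The plan is to reduce both formulas \eqref{SI} and \eqref{SI0} to a single Rauch-type variational principle on the moduli space of spectral covers, with the two variation directions identified with specific meromorphic differentials on $\CC$. Fix $z(x)$ and $z(y)$. An infinitesimal deformation of $\CC$ at fixed pole positions $t_j$ is encoded by a meromorphic differential $\xi$ on $\CC$ describing the infinitesimal change $v\to v+\epsilon\xi$ of the canonical differential $v=y\,\d z$. For the direction $\partial/\partial I_\a$ at fixed $\mu_j^{(k)}$, $\xi$ must be holomorphic with $\oint_{a_\b}\xi=\delta_{\a\b}$, forcing $\xi=v_\a$. For the direction $\partial/\partial(2\pi {\rm i}\,\mu_j^{(k)})$ at fixed $I_\a$, using \eqref{lmu1}, $\xi$ must have residues $+1$ at $t_j^{(k)}$, $-1$ at $t_j^{(k-1)}$ and vanishing $a$-periods, so $\xi=w_{t_j^{(k)},t_j^{(k-1)}}$.

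Next I would invoke the Rauch-type variational principle from \cite{IMRN,KalKor,JDG}: under a deformation of $\CC$ in the direction of such a $\xi$, with $z(x),z(y)$ held fixed, $\delta_\xi S_q(x,y)$ is localized at the branch points $x_1,\dots,x_p$ of $\pi\colon\CC\to\C$, since away from the ramification locus $z$ is a local coordinate and the deformation acts trivially. The localization produces a sum of residues of the form
\begin{gather*}
\delta_\xi S_q(x,y)\Big|_{z(x),z(y)}=-\f{\pi {\rm i}}{2}\sum_{j=1}^{p} {\rm res}|_{t=x_j}\f{\xi(t)\,K(t;x,y)}{\d z(t)\,\d y(t)},
\end{gather*}
for some bidifferential $K(t;x,y)$ depending only on the complex structure of $\CC$. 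Both formulas of the proposition then follow from the substitutions $\xi=v_\a$ and $\xi=w_{t_j^{(k)},t_j^{(k-1)}}$ together with the identification $K(t;x,y)=W_t[S_q(x,t),S_q(t,y)]$.

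The third step is establishing this identification. One route is to apply classical Rauch formulas to each factor in the definition \eqref{Szegodef}: the Abel map $\Ucal(x)$ at fixed $z(x)$ (whose variation comes from the variation of $v_\a$), the prime form $E(x,y)$, and the theta arguments (accounting for the variation of the period matrix $\Omega$ via the heat equation). Each contribution is a residue sum at branch points whose kernels are built from $B(x,t)$, $B(t,y)$ and $v_\a(t)$. Fay's identity \eqref{Fayid}, together with its derivatives with respect to $x$ and $y$, then allows one to repackage the linear combination of these bidifferentials into the single Wronskian kernel $W_t[S_q(x,t),S_q(t,y)]$.

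The principal obstacle is precisely this combinatorial collapse. The Rauch contributions from the prime form, the Abel map and the theta function produce several terms with different bidifferential structures, and showing that their sum equals $W_t[S_q(x,t),S_q(t,y)]/(\d z(t)\,\d y(t))$ up to the uniform prefactor $-\pi {\rm i}/2$ requires a careful local analysis at each branch point $x_j$, where $\d z$ has a simple zero and $S_q(x,t)$ admits a square-root expansion in a uniformizer on $\CC$. The Wronskian precisely extracts the ``odd'' part of the product $S_q(x,t)\,S_q(t,y)$ under the local $\Z/2$ covering symmetry at $x_j$, which is the part whose residue survives the sum; this, combined with Fay's identity, yields the claimed identification and completes the proof.
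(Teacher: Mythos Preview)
Your outline is sound and would lead to a correct proof, but it takes a longer route than the paper's and inverts the order of the two main ingredients.

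The paper does not derive the Wronskian kernel $W_t[S_q(x,t),S_q(t,y)]$ from Rauch formulas for the individual constituents of \eqref{Szegodef} plus Fay's identity. Instead it quotes directly the variational formula of \cite[Proposition~2]{KalKor} on the ambient moduli space $\Hcal_g^k$ of pairs $(\Ccal,v)$ with $v$ a meromorphic abelian differential, namely
\[
\frac{\delta}{\delta\bigl(\int_{s_i} v\bigr)} S_q(x,y)=\frac{1}{4}\int_{t\in s_i^*}\frac{W_t[S_q(x,t),S_q(t,y)]}{v(t)},
\]
where $\{s_i\}$ is a basis of the relevant relative homology and $\{s_i^*\}$ its dual. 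The Wronskian is thus already present as input. The paper's only remaining work is to restrict from $\Hcal_g^k$ to the spectral-cover locus: there the $b$-periods of $v=y\,\d z$ and the relative periods between zeros become dependent functions of $\{I_\a,\mu_j^{(k)}\}$, and the chain rule, carried out exactly as in \cite[Theorem~4.3]{IMRN}, converts the integral over $s_i^*$ into the sum of residues at the branch points $x_j$ with the stated prefactor.

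Your approach reverses this: you first localize at branch points by a general Rauch-type argument, leaving the kernel $K(t;x,y)$ undetermined, and then propose to identify $K$ with the Wronskian by summing the separate Rauch contributions from $E(x,y)$, the Abel map, and the period matrix, and collapsing them via \eqref{Fayid}. That collapse is exactly the content of the result the paper cites from \cite{KalKor}; so what you flag as the ``principal obstacle'' is precisely the step the paper outsources. Your route is more self-contained and illuminates \emph{why} the Wronskian appears, at the cost of redoing a nontrivial computation; the paper's route is shorter but relies on the cited formula as a black box. Either way, your identification of the deformation directions with $v_\a$ and $w_{t_j^{(k)},t_j^{(k-1)}}$ is correct and matches the paper.
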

\begin{proof}
The proof is based on the variational formula for the Szeg\H{o} kernel on the moduli space of meromorphic Abelian differentials (see \cite[Section~2.2 and Proposition~2]{KalKor}, based on formalism of~\cite{JDG,Annalen}).
Namely, denote by $\Hcal_g^k$ the moduli space of pairs $(\Ccal,v)$, where $\Ccal$ is a Riemann surface of genus $g$ and $v$ is an abelian differential with $k$ simple poles at $y_1,\dots,y_k$ and $r=2g-2+k$ simple zeros at $x_1,\dots,x_r$. The homological coordinates on $\Hcal_g^k$ are defined as periods
of $v$ over a~basis $\{s_i\}$ in the relative homology group $H_1\bigl(\CC\setminus \{y_i\}_{i=1}^k, \, \{x_j\}_{j=1}^r\bigr)$. We denote the dual basis in the dual relative homology group $H_1\bigl(\CC\setminus \{x_j\}_{j=1}^r,\,
\{y_i\}_{i=1}^k\bigr)$ by $\{s_i^*\}$ with the intersection index $s_i^*\circ s_j=\delta_{ij}$.
Then variational formulas from \cite{KalKor} look as follows:
\begin{gather}\label{varS}
\f{\delta}{\delta \bigl(\int_{s_i} v \bigr)}{S_{q}(x,y)}= \f{1}{4}\int_{t\in s_i^*} \f{W_t\bigl[S_{q}(x,t),S_{q}(t,y)\bigr]}{v(t)},
\end{gather}
where $W_t[f(t),g(t)]=f'g-fg'$ is the Wronskian, and the differentiation is performed keeping $\int_{x_1}^x v$ and
 $\int_{x_1}^y v$ constant for some choice of the ``first'' branch point $x_1$.

On the other hand, according to the approach of \cite{IMRN}, the space of spectral curves
with simple branch points (the assumption of simplicity can be omitted if necessary, as in the ${\rm SL}(2)$ case) can be naturally imbedded in the space $\Hcal_g^k$ with $k=nm$. On the space of spectral curves we
identify $v$ with the differential $y\,\d z$. The poles $y_j$ of $v$ are identified with points $t_j^{(k)}$ for all
$j=1,\dots,m$ and $k=1,\dots,n$.
Moreover, the $b$-periods of $y\,\d z$, as well as integrals of $y\,\d z$ between branch points become
dependent on its $a$-periods $\{I_\a\}_{\a=1}^g$ and residue variables $\mu_j^{(k)}$. Then the formulas \eqref{SI} and \eqref{SI0} can be derived from
\eqref{varS} in complete analog to the proof of Theorem~4.3 of \cite{IMRN} dealing with variation of holomorphic abelian differentials.
\end{proof}

Finally, we shall need the following.

\begin{Lemma}\label{lemma51}
Let the points $x$, $y$ and $z^{(j)}$ for some $x,y\in \CC$ and $z\in \C$ lie in the same fundamental polygon of $\CC$.
Then
the following identity holds:
\begin{gather*}
\sum_{i=1}^n S_q\bigl(x,z^{(i)}\bigr)S_q\bigl(z^{(i)},y\bigr)= S_q(x,y)\left(\f{1}{z(x)-z}-\f{1}{z(y)-z}\right)\,\d z.
\end{gather*}
\end{Lemma}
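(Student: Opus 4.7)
The plan is to interpret both sides of the identity as meromorphic one-forms in $z \in \mathbb{P}^1$ (valued in the line $\sqrt{dx\,dy}$), match their principal parts, and invoke the vanishing of global holomorphic one-forms on $\mathbb{P}^1$. Introduce the one-form $\omega(t) := S_q(x,t)\,S_q(t,y)$ in the variable $t \in \CC$. The first claim is that $\omega$ is single-valued on $\CC$: using the pseudo-periodicities of $\Theta$ and $E$, the factor $S_q(x,t)$ acquires the scalar monodromy $\mathrm{e}^{-2\pi\mathrm{i} q_\alpha}$ as $t$ crosses a $b_\alpha$-cycle, while $S_q(t,y)$ acquires the reciprocal factor $\mathrm{e}^{+2\pi\mathrm{i} q_\alpha}$ (because $t$ now sits in the first slot rather than the second), and the two monodromies cancel; the $a$-cycle monodromy is trivial for each factor separately. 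Using the diagonal asymptotics $S_q(x,t) \sim \sqrt{dx\,dt}/(x-t)$, one finds that $\omega$ has simple poles only at $t = x$ and $t = y$ with residues $-S_q(x,y)$ and $+S_q(x,y)$, respectively.

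Next, I would push $\omega$ forward along the cover $\pi\colon \CC \to \mathbb{P}^1$. By definition $(\pi_\ast\omega)(z) = \sum_{i=1}^n \omega(z^{(i)})$, which, after identifying $dz^{(i)} = dz$ on each unramified sheet, coincides with the left-hand side of the lemma. As a meromorphic one-form on $\mathbb{P}^1$ the pushforward has: simple poles only at $z = z(x)$ and $z = z(y)$ with residues $-S_q(x,y)$ and $+S_q(x,y)$ inherited from the previous step; regularity at every branch point $z_j$ of $\pi$, since if $w = \sqrt{z-z_j}$ is the local coordinate on $\CC$ and $\omega = g(w)\,dw$ with $g$ analytic near zero, the sum of contributions from the two branching sheets equals $\frac{g(w)-g(-w)}{2w}\,dz$, a Taylor series in $w^2 = z-z_j$; and regularity at $z = \infty$, because each $\infty^{(i)}$ is an unramified regular point of $\omega$ (the condition that $Q_j(z)(dz)^j$ be holomorphic at infinity rules out ramification there), so in the local coordinate $w = 1/z$ at infinity $\omega$ reads $O(1)\,dw$, which translates to $O(z^{-2})\,dz$ in the base coordinate and hence is holomorphic at $z = \infty$.

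The right-hand side of the identity is manifestly a meromorphic one-form on $\mathbb{P}^1$ with exactly the same pole divisor, the same residues, and the same $O(z^{-2})\,dz$ decay at infinity. Therefore LHS minus RHS is a holomorphic one-form on $\mathbb{P}^1$; since $H^0(\mathbb{P}^1,\Omega^1) = 0$, this difference vanishes and the lemma follows. The main obstacle is the first step: the clean cancellation of $b$-cycle monodromies between the two Szeg\H{o}-kernel factors, which requires careful bookkeeping of the quasi-periodicities of $\Theta$ and $E$. Once this single-valuedness is in place, the remainder of the argument is routine residue and local-behaviour analysis at the branch points and at infinity.
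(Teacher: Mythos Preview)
Your proof is correct and follows exactly the paper's approach: both recognize the left-hand side as a meromorphic one-form on the base $\mathbb{P}^1$, locate its only poles at $z=z(x)$ and $z=z(y)$ with residues $\mp S_q(x,y)$, and conclude by matching with the right-hand side. The paper's proof is a three-sentence sketch of this same argument; your version simply spells out the steps (monodromy cancellation in $t$, regularity of the pushforward at branch points, and the behaviour at $z=\infty$) that the paper leaves implicit.
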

\begin{proof}
The left-hand side is the 1-form in $z$ depending only on the point of the base. It has simple poles at
$z=z(x)$ and $z=z(y)$. The coefficient can depend only on $x$ and $y$
and must be~$S_q(x,y)$ due to the singularity structure at $x=z^{(i)}$ and $y=z^{(i)}$.
\end{proof}

\section{Spectral transform as symplectomorphism}

Here we prove that the spectral transform is a symplectomorphism, and, moreover, explicitly compute the
symplectic potential $\theta_\Acal$ in terms of canonical coordinates on the space $\Scal$.

\begin{Theorem}
The symplectic potential $\theta_\Acal$ $\eqref{thA}$, where $G_j$ are given by
$\eqref{Gj}$ and $\eqref{G0def}$, takes the following form in terms of spectral variables:
\begin{gather*}
\theta_{\Acal}= \sum_{\a=1}^g I_\a \delta q_\a + \sum_{j=1}^m\sum_{k=1}^{n-1} \mu_j^{(k)}\delta \rho_j^{(k)}.
\end{gather*}
\end{Theorem}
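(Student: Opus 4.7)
The plan is to split $\theta_\Acal$ along the factorisation $G_j=G_j^0 R_j$ and evaluate the two resulting pieces by very different methods: the toric piece is a routine diagonal computation, whereas the curve piece requires the variational machinery of Section~\ref{secvarSz}.

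\emph{Step 1 (Splitting).} Because $L_j$ and $R_j$ are both diagonal and hence commute, cyclicity of the trace yields
\[
\tr\bigl(L_j G_j^{-1}\delta G_j\bigr)=\tr\bigl(L_j(G_j^0)^{-1}\delta G_j^0\bigr)+\tr\bigl(L_j R_j^{-1}\delta R_j\bigr),
\]
so $\theta_\Acal=\theta_\Acal^0+\theta_\Acal^R$, with the two summands defined in the obvious way.

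\emph{Step 2 (Toric piece).} The summand $\theta_\Acal^R$ is purely diagonal and evaluates to $\sum_{j,k}\lambda_j^{(k)}\delta\log r_j^{(k)}$. Substituting $\lambda_j^{(k)}=\mu_j^{(k)}-\mu_j^{(k-1)}$ and $\delta\log r_j^{(k)}=\delta\rho_j^{(k)}-\delta\rho_j^{(k-1)}$ and applying Abel summation with the boundary conditions $\mu_j^{(0)}=\mu_j^{(n)}=0=\rho_j^{(0)}=\rho_j^{(n)}$ produces the second summand $\sum_{j,k=1}^{n-1}\mu_j^{(k)}\delta\rho_j^{(k)}$ of the claimed formula.

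\emph{Step 3 (Curve piece reorganised as residues).} By Lemma~\eqref{Psihi}, $((G_j^0)^{-1})_{ab}=\psi(z_0^{(b)},t_j^{(a)})$, while $\lambda_j^{(a)}=\operatorname{res}|_{t_j^{(a)}}v$. Consequently
\[
\theta_\Acal^0=\sum_{j,a}\operatorname{res}\Big|_{t=t_j^{(a)}} v(t) M(t),\qquad M(t):=\sum_{b=1}^n \psi\bigl(z_0^{(b)},t\bigr)\,\delta\psi\bigl(t,z_0^{(b)}\bigr),
\]
where $M$ is naturally viewed as a function on $\CC$ and the residues run over all poles of $v$.

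\emph{Step 4 ($\delta q$-contribution).} Decompose $\delta\psi=\delta_q\psi+\delta_I\psi+\delta_\mu\psi$ according to the Darboux coordinates on $\Scal$. For $\delta q_\gamma$, use $\delta_{q_\gamma}S_q(x,y)=-\int_{a_\gamma}S_q(x,t)S_q(t,y)$, interchange the cycle integral with the residue in Step~3, and collapse the inner sum over sheets at $z_0$ via Lemma~\ref{lemma51}, which produces the rational factor $(z(s)-z_0)^{-1}-(z(t)-z_0)^{-1}$. Combined with Fay's identity~\eqref{Fayid} for the remaining product $S_q(t,s)S_q(s,t)$ and the residue theorem on $\CC$, all $t$-dependence is absorbed and what survives is the $a_\gamma$-period $I_\gamma=\oint_{a_\gamma}v$, yielding $\sum_\gamma I_\gamma\delta q_\gamma$.

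\emph{Step 5 ($\delta I$ and $\delta\mu$ contributions --- the main obstacle).} Substituting the variational formulas \eqref{SI}, \eqref{SI0} of Proposition~\ref{varSz} expresses each $\delta S_q$ as a sum of residues at the branch points $x_j$ of Wronskians of Szeg\H{o} kernels, producing a double sum of residues (over branch points from the variational formula and over poles of $v$ from Step~3). Applying Lemma~\ref{lemma51} to carry out the sum over $b$ and Fay's identity~\eqref{Fayid} to simplify coincident arguments, the integrand becomes a global meromorphic 1-form on $\CC$ whose total residue vanishes by the residue theorem. The main obstacle is precisely this cancellation: one must manage three simultaneous residue sites (branch points, poles of $v$, and the exceptional fibre $z_0^{(b)}$) and verify that the branch-point and pole-of-$v$ contributions conspire to cancel, so that only the $\delta q$-piece of Step~4 survives. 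I expect this combinatorial residue bookkeeping to constitute the bulk of the technical content of the published proof.
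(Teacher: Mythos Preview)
Your proposal is correct and follows essentially the same route as the paper: the splitting $G_j=G_j^0R_j$, the Abel-summation for the toric piece, the residue reorganisation of $\theta_\Acal^0$, and the case-by-case treatment of $\delta q_\alpha$, $\delta I_\alpha$, $\delta\mu_j^{(k)}$ via the variational formulas, Lemma~\ref{lemma51}, and the residue theorem are all exactly what the paper does.

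One calibration remark: you misjudge where the weight lies. Your Step~5 is \emph{not} the main obstacle; in the paper it is in fact the cleaner half. After Lemma~\ref{lemma51} collapses the $b$-sum and you swap the two residues, the inner sum over the poles $t_\ell^{(a)}$ of $v$ is converted by the residue theorem in $x$ to a single residue at $x=t$, and that residue of the Wronskian is computed by local expansion (not Fay) to equal $\d y(t)\,\d z(t)$, cancelling the denominator. The outer sum then reduces to $\sum_i\operatorname{res}_{t=x_i}v_\alpha(t)$ (resp.\ $\operatorname{res}_{t=x_i}w_{t_j^{(k)},t_j^{(k-1)}}$), which vanishes because holomorphic differentials (resp.\ third-kind differentials with poles away from the branch points) are regular at the $x_i$. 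There is no third ``exceptional fibre $z_0^{(b)}$'' residue site to manage: after Lemma~\ref{lemma51} the $z_0$-dependence sits in a harmless rational factor. The genuinely intricate bookkeeping is your Step~4, where Fay's identity \eqref{Fayid} is needed to extract the residue at $x=t$ and produce $y(t)\,\d z(t)$ under the $a_\alpha$-integral.
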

\begin{proof}
Let us first observe that the symplectic potential $\theta_\Acal$ (\ref{thA}), where $G_j$ are given by
\eqref{Gj} and \eqref{G0def},
can be expressed via the matrices $\Psih(z)$ \eqref{psih} and $Y(z)$ \eqref{Lz} by
residue theorem as follows:
\begin{gather*}\label{thA1}
\theta_{\Acal} = \sum_{j=1}^m \res_{z=t_j} \tr\bigl(Y \Psih^{-1} \delta \Psih\bigr)\,\d z +\sum_{j=1}^m\sum_{k=1}^{n-1} \mu_j^{(k)}\delta \rho_j^{(k)}.
\end{gather*}

The definition \eqref{psih} of $\Psih$ can be written as
\begin{gather*}
\Psih_{ab}(z,z_0)= \psi\bigl(z^{(b)}, z_0^{(a)}\bigr),
\end{gather*}
where $\psi(x,x_0)$ is expressed via Szeg\H{o} kernel via \eqref{psidef}.

Due to \eqref{Psihi}, we have
\begin{gather*}
 (\Psih^{-1})_{ab}= \psi\bigl(z_0^{(b)}, z^{(a)}\bigr).
\end{gather*}

Consider now three different types of contributions to $\theta_{\Acal}$.

{\bf Contribution of $\boldsymbol{\delta {q_\a}}$.} We have
\begin{gather*}
\bigl(\Psih^{-1}(z) \delta_{q_\a} \Psih (z)\bigr)_{aa} =\f{(z-z_0)^2}{\d z\,\d z_0} \sum_{b=1}^n \oint_{t\in a_\gamma} S_q\bigl( z^{(a)} ,t\bigr)S_q\bigl(t,z_0^{(b)}\bigr) S_q\bigl(z_0^{(b)}, z^{(a)}\bigr)
\overset{\text{Lemma \ref{lemma51}}}{=}\\
\phantom{\bigl(\Psih^{-1}(z) \delta_{q_\a} \Psih (z)\bigr)_{aa}}= \f{(z-z_0)^2}{\d z}\oint_{t\in a_\a} S_q\bigl( z^{(a)} ,t\bigr)S_q\bigl(t, z^{(a)} \bigr) \left(\f{1}{z(t)-z_0}-\f{1}{z-z_0}\right)
\end{gather*}

Recall now that $y\colon \CC\to \C$ is a meromorphic function with simple poles at the $mn$ points above $t_j^{(a)}$. The singular parts are
\[
y (x) = \frac{\l_j^{(k)} }{z(x)-t_j} \qquad \mbox{as } x\to t_j^{(a)}.
\]

The contribution of $\delta q_\a$ to $\theta_\Acal$ is therefore given by
\begin{gather*}
\theta_{\Acal}(\delta_{q_\a}) \!=\!
\sum_{j=1}^m \sum_{a=1}^n {\rm res}|_{x=t_j^{(a)}} \!\left[y(x)(z(x)-z_0)^2
\oint_{t\in a_\a}\!\!S_q( x ,t)S_q(t, x) \bigg(\f{1}{z(t)-z_0}-\f{1}{z(x)-z_0}\bigg)\right]\\
\phantom{\theta_{\Acal}(\delta_{q_\a}) }
\!=\!\sum_{j=1}^m \sum_{a=1}^n \res_{x=t_j^{(a)}} y(x)\oint_{t\in a_\a}S_q( x ,t)S_q(t, x)
 \left(\f{(z(x)-z_0)^2}{z(t)-z_0}-(z(x)-z_0)\right).
\end{gather*}
The integration contours $a_\a$ in the $t$-variable can be chosen so as not to intersect the integration contours for the residues in the $x$-variable and hence the integrand is regular. Using Fubini's theorem, we can exchange the order of integrations
\begin{gather*}
\theta_{\Acal}(\partial_{q_\a}) =\oint_{t\in a_\a} \sum_{j=1}^m \sum_{a=1}^n \res_{x=t_j^{(a)}} y(x)S_q( x ,t)S_q(t, x ) \bigl(z(x)-z(t)\bigr)\,\f{z(x)-z_0}{z(t)-z_0}.
\end{gather*}

The sum over the residues is the sum over all poles above the $t_j$'s of the differential (in the~$x$ variable)
\begin{gather}\label{diff}
y(x)S_q( x ,t)S_q(t, x ) \bigl(z(x)-z(t)\bigr)\,\f{z(x)-z_0}{z(t)-z_0}.
\end{gather}
Due to (\ref{Fayid}), this differential has only one additional simple pole at $x=t$ (taking into account the asymptotics of $B(x,t)$ on the diagonal), with residue
\begin{gather*}
\res_{x=t}\left[ y(x)S_q( x ,t)S_q(t, x ) \bigl(z(x)-z(t)\bigr)\,\f{z(x)-z_0}{z(t)-z_0} \right]= - y(t)\,\d z(t) .
\end{gather*}
The differential \eqref{diff} does not have poles at $x=\infty^{(a)}, \ a=i,\dots,n$ because, even if $z(x)$ has simple poles at these points, the function $y(x)$ has double zeros there (such that $v=y\,\d z$ is regular at infinities).

Thus we can use Cauchy's residue theorem and get
\begin{gather*}
\theta_{\Acal}(\pa_{q_\a}) = \oint_{ a_\a} y\,\d z = I_\a.
\end{gather*}

{\bf Contribution of $\boldsymbol{\delta_{I_\a}}$.}
Using variational formulas (\ref{SI}), we get
\begin{gather*}
\bigl(\Psi^{-1}\delta_{I_\a}\Psi\bigr)_{aa}(z)
=\sum_{c=1}^n \frac{S_q\bigl(z_0^{(c)} ,x\bigr)}{\d z(x)}
\left(\f{\pi {\rm i}}{2}\sum_{x_i}
{\rm res} |_{t=x_i} \f{v_\a(t)W_t\bigl[S_q(x,t), S_q\bigl(t,z_0^{(c)}\bigr)\bigr]}{\d y(t)\,\d z(t)}\right)\bigg|_{x=z^{(a)}}.
\end{gather*}

Therefore,
\begin{gather*}
\theta_{\Acal}(\pa_{I_\a}) =
\sum_{j=1}^m \sum_{a=1}^n \l_j^{(a)} \bigl(\Psi^{-1}\delta_{I_\a}\Psi\bigr)_{aa}\\
=
\f{\pi {\rm i}}{2}\sum_{\ell=1}^m \sum_{a=1}^n \res_{x=t_\ell^{(a)}} \le(\sum_{x_i\in b.pts} \res_{t=x_i} y(x)\frac {v_\a(t) W_t\bigg[S_q(x,t), S_q(t,x) \f{\bigl(z(x)-z(t)\bigr)\bigl(z(x)-z_0\bigr)}{z(t)-z_0 }\bigg]}{\d y(t)\,\d z(t) }\ri).
\end{gather*}
Once again, we can swap the order of residues because the branch-points are away from the points $t_j^{(a)}$.
One can verify that the differential of $x$ in the inmost residue has poles only above the $t_j$'s and at $x=t$. The residue at $x=t$ can be computed using the following relation which can be verified using the local expansion of $S_q(x,t)$ near the diagonal (one needs to use the coordinate $z(x)$ and observe that the second argument of the Wronskian is non-singular at~$x=t$):
\begin{gather*}
\res_{x=t} \left(y(x)W_t\bigg[S_q(x,t), S_q(t,x) \f{\bigl(z(x)-z(t)\bigr)\bigl(z(x)-z_0\bigr)}{z(t)-z_0 } \bigg]\right)
= \d y(t)\,\d z(t).
\end{gather*}

Therefore,
\begin{gather*}
\theta_{\Acal}(\delta_{I_\a})
=\f{\pi {\rm i}}{2} \sum_{x_i} \res_{t=x_i} v_\a(t) =0.
\end{gather*}

{\bf Contribution of $\boldsymbol{\delta \mu_j^{(k)}}$.} To compute this contribution to $\theta_\Acal$, one can repeat the previous computation using (\ref{SI0}). This boils down to replacing $v_\a$ by $w_{t_j^{(k)},t_j^{(k-1)}}$.
The end result is
\begin{gather*}
 \theta_{\Acal}(\delta_{ \mu_j^{(k)}})= \f{\pi {\rm i}}{2} \sum_{x_i} \res_{t=x_i} w_{t_j^{(k)},t_j^{(k-1)}},
\end{gather*}
which vanishes due to the genericity assumption that none of the branch points $x_j$ project to the poles $t_k$.
\end{proof}

The following corollary provides a set of canonical Darboux coordinates on the space $\Acal$.

\begin{Corollary}\label{corsymp}
The symplectic form $\omega_\Acal$ on the space $\Acal$ coincides with the $($pullback under the spectral transform$)$ of
the symplectic form $\omega_\Scal$ on the space of spectral data $\Scal$ $\eqref{SP1}$.
Therefore, $\omega_\Acal$
can be written in terms of
spectral variables $\eqref{fullco}$ as follows:
\begin{gather}\label{oAS}
\omega_{\Acal}= \sum_{\a=1}^g \delta I_\a \wedge \delta q_\a
+ \sum_{j=1}^m\sum_{k=1}^{n-1} \delta\mu_j^{(k)} \wedge \delta\rho_j^{(k)}.
\end{gather}
\end{Corollary}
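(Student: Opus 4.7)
The plan is to derive this Corollary as a direct consequence of the preceding Theorem. By construction $\omega_\Acal = \delta \theta_\Acal$, and the Theorem provides a closed-form expression for $\theta_\Acal$ in the spectral coordinates \eqref{fullco}. My first step will be to apply the exterior derivative $\delta$ to
\[
\theta_\Acal = \sum_{\alpha=1}^g I_\alpha\, \delta q_\alpha + \sum_{j=1}^m \sum_{k=1}^{n-1} \mu_j^{(k)}\, \delta \rho_j^{(k)};
\]
using $\delta^2 = 0$ and the Leibniz rule, every term of the form $f\, \delta g$ produces $\delta f \wedge \delta g$, which immediately yields the expression \eqref{oAS}.

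Next, to see that $\omega_\Acal$ equals the pullback of $\omega_\Scal$ under the spectral transform, I will invoke the Proposition establishing that $F^{\bf t}$ and $H$ are mutually inverse bijections between $\Acal \setminus \Dcal$ and $\Scal \setminus \operatorname{div}(\Theta)$. The Darboux coordinates $\bigl(I_\alpha, q_\alpha, \mu_j^{(k)}, \rho_j^{(k)}\bigr)$ are defined intrinsically from the spectral data and therefore have identical meaning when viewed on $\Scal$ directly or pulled back to $\Acal$ via $F^{\bf t}$. Comparing the formula just derived with the definition \eqref{oS} of $\omega_\Scal$ then produces the coincidence of the two forms (up to the trivial rewriting $\delta \mu_j^{(k)} \wedge \delta \rho_j^{(k)} = -\delta \rho_j^{(k)} \wedge \delta \mu_j^{(k)}$, a matter of sign convention in \eqref{oS}).

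I do not anticipate any real obstacle here, since the analytic heart of the matter was already handled inside the Theorem itself, namely the residue calculus based on Fay's identity \eqref{Fayid}, the sheet-summation identity of Lemma~\ref{lemma51}, and the Szeg\H{o}-kernel variational formulas of Proposition~\ref{varSz}, all invoked to convert $\theta_\Acal$ into Liouville form on $\Scal$. The Corollary is the formal shadow of that Theorem: once the symplectic potential is in the canonical form $\sum p_i\, \delta q_i$, exterior differentiation produces the standard symplectic two-form of cotangent-bundle type, and no further analysis on the spectral curve $\CC$ is required. The only verification worth spelling out is that on the toric directions the pairing $\bigl(\mu_j^{(k)}, \rho_j^{(k)}\bigr)$ truly plays the role of canonically conjugate variables on both sides, which is already built into the construction of $R_j$ via \eqref{Gj} and \eqref{G0def}.
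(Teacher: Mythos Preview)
Your proposal is correct and matches the paper's approach: the Corollary is stated there without a separate proof, as an immediate consequence of the Theorem via $\omega_\Acal=\delta\theta_\Acal$. Your observation about the sign in the toric block is apt, since the paper's own formulas \eqref{oS} and \eqref{oAS} differ by exactly that antisymmetry.
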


Notice that this corollary gives the set of Darboux coordinates for the symplectic form $\omega_\Acal$, which
is defined on a bigger space than the symplectic forms considered in \cite{AHH} or \cite{BBT}.
Moreover, even if one ignores this extension and restrict (\ref{oAS}) to symplectic leaves of the usual Kirillov--Kostant bracket, we get the form
\begin{gather*}
\omega^0_{\Acal}= \sum_{\a=1}^g \delta I_\a \wedge \delta q_\a,
\end{gather*}
i.e., $(I_\a,q_\a)$ are Darboux coordinates for the symplectic form $\omega_{\Acal_0}^L$ on the
symplectic leaf of the space $\Acal$. It is this form which was studied in previous works \cite{BBT}.

However $\bigl\{I_\a,q_\a\bigr\}_{\a=1}^g$ are not the previously known Darboux coordinates on $\omega_{\Acal_0}^L$ described in say \cite[Section 5.11]{BBT}:
while the periods $I_\a$ are the same, the coordinates $q_\a$ differ from the ones of \cite{BBT}
by components of the vector of Riemann constants. To discuss the link of our scheme to the well-known textbook results we need to analyze the dependence of this vector of Riemann constants on moduli.

\section{Variations of vector of Riemann constants}\label{VarKsec}

Here we describe variations of the vector of Riemann constants with respect to moduli $I_\a$ of the spectral curve
by combining results of \cite{IMRN,JDG}.

To formulate these formulas, we need to define the following object, denoted $\s(x,y)$ \cite{Fay92} (see also~\cite[formula~(2.15)]{JDG}):
\begin{gather*}
\s(x,y)= \left(\f{c(x)}{c(y)}\right)^{1/(1-g)}.
\end{gather*}
Here $c(x)$ is the following fundamental (multi-valued) $g(1-g)/2$-differential on $\CC$:
\begin{equation*}
c(x)=\frac{1}{{\mathcal W}[v_1, \dots, v_g](x)}\sum_{\a_1, \dots,
\a_g=1}^g
\frac{\partial^g\Theta(K^x)}{\p z_{\a_1}\cdots \p z_{\a_g}}
v_{\a_1}\cdots v_{\a_g}(x),
\end{equation*}
where
\begin{gather*}
{\mathcal W}[v_1, \dots, v_g](x)= {\rm \det}_{1\leq \a, \b\leq g}\bigl\|v_\b^{(\a-1)}(x)\bigr\|
\end{gather*}
is the Wronskian determinant of holomorphic differentials
 at the point $x$.

\begin{Proposition} Let $x\in \CC$ be a point such that $\pi(x)$ is independent of coordinates $I_\a$ and $\mu_j^{(k)}$ on the moduli space of spectral curves.
 Then the following variational formulas hold:
\begin{gather}\label{varKI}
\f{\delta K^x_\b}{\delta I_\a}\Big|_{z(x)}= \sum_{j=1}^{p} {\rm res}|_{t=x_j}\f{v_\a(t)\,v_\b(t) \d_t\log\frac{\s(t,x_0)E(t,x)^{g-1}}{\sqrt{v(t)}} }{\d z(t)\,\d y(t)}
\end{gather}
and
\begin{gather}\label{varKmu}
\f{\delta K^x_\a}{\delta \mu_j^{(k)}}\Big|_{z(x)}=\sum_{j=1}^{p} {\rm res}|_{t=x_j}\f{v_\a(t)\, w_{t_j^{(k)},t_j^{(k-1)}}(t)\,\d_t\log\frac{\s(t,x_0)E(t,x)^{g-1}}{\sqrt{v(t)}} }{\d z(t)\,\d y(t)}.
\end{gather}
\end{Proposition}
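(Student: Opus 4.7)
The plan is to mirror the strategy used in the proof of Proposition~\ref{varSz}: derive the two identities as specializations to the space of spectral curves of general variational formulas for $K^x$ on the moduli space $\Hcal_g^k$ of Riemann surfaces equipped with a meromorphic abelian differential, and then collapse the resulting contour integrals to residue sums over the branch points.

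First, I would invoke from \cite{JDG,IMRN} the variational formula for the vector of Riemann constants $K^x$ on $\Hcal_g^k$ with respect to the homological coordinates $\int_{s_i} v$, applied with $z(x)$ held fixed (which is permissible since by hypothesis $\pi(x)$ does not depend on the moduli). The formula expresses $\delta K^x_\beta / \delta \bigl(\int_{s_i} v\bigr)$ as an integral over the dual cycle $s_i^*$ of a meromorphic differential involving $v(t)^{-1}$ and the logarithmic differential $\d_t\log\bigl[\sigma(t,x_0) E(t,x)^{g-1}/\sqrt{v(t)}\bigr]$. This particular combination arises naturally from Fay's identities~\cite{Fay92} relating the Wronskian $c(x)$, the theta-function, and $K^x$---essentially the same mechanism that produces the analogous factor in the Bergman tau-function framework of \cite{JDG,IMRN}.

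Next, I would specialize to the finite-dimensional subspace of $\Hcal_g^k$ cut out by spectral covers $(\CC, v=y\,\d z)$ with fixed poles $t_j$. Under this embedding, the natural homological coordinates on the subspace are $I_\alpha=\oint_{a_\alpha} v$ and, via \eqref{lmu1}, the residue-type coordinates $\mu_j^{(k)}$. Applying the chain rule to the general variational formula, and tracking the corresponding dual differentials (the holomorphic differentials $v_\alpha$ for the $I_\alpha$-variables and the normalized third-kind differentials $w_{t_j^{(k)},t_j^{(k-1)}}$ for the $\mu_j^{(k)}$-variables, exactly as in the derivation of \eqref{SI}--\eqref{SI0}), one obtains contour-integral expressions for $\delta K^x_\beta/\delta I_\alpha$ and $\delta K^x_\beta/\delta \mu_j^{(k)}$.

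Finally, I would deform these contours so as to collapse them to residue sums at the branch points $x_j$, mirroring Theorem~4.3 of~\cite{IMRN}. The integrand has candidate singularities at the zeros of $v$ (the branch points, which are simple poles of $1/v$), at the poles of $v$ (where $1/v$ vanishes, so no contribution), and at the points $t=x$ and $t=x_0$ coming from the prime form and the $\sigma$-differential. The principal obstacle---and the real content of the bookkeeping---will be to verify, using the local expansions of $E(t,x)$ and $\sigma(t,x_0)$ near the diagonal and at the zeros of $v$, that the singularities at $t=x$ and $t=x_0$ cancel: this is precisely the role played by the specific logarithmic derivative $\d_t\log[\sigma(t,x_0)E(t,x)^{g-1}/\sqrt{v(t)}]$ and by the assumption that $z(x)$ is held constant. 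Once these cancellations are established, only the residues at the branch points survive, yielding exactly \eqref{varKI} and \eqref{varKmu}.
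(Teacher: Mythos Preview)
Your proposal is correct and follows essentially the same approach as the paper. The paper's proof is terser: it quotes directly the contour-integral variational formula for $K^x_\beta$ on $\Hcal_g^k$ (formula (2.58) of \cite{JDG}, extended to the meromorphic setting via \cite{KalKor}) and then refers to the formalism of \cite{IMRN}, exactly as in the proof of Proposition~\ref{varSz}, to pass to residue sums at the branch points; your outline spells out the same steps in more detail.
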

\begin{proof}
The formula (2.58) of \cite{JDG} gives the variation of $K^x$ on the space of
holomorphic abelian differentials; the straightforward extension to spaces of
meromorphic abelian differentials with simple poles described in \cite{KalKor} gives the formula, which in notations of the proof of Proposition~\ref{varSz} look as follows:
\begin{gather}
\label{varkp1}
\f{\p K^x_\b}{\p \bigl(\int_{s_j}v \bigr)}\Big|_{z(x)}=\f{1}{2\pi {\rm i}}\oint_{s_j^*}
\f{v_\b(t)}{v(t)}\d_t\log\frac{\s(t,x_0)E(t,x)^{g-1}}{\sqrt{v(t)}}.
\end{gather}
Similarly to the proof or Proposition \ref{varSz}, the variational formulas \eqref{varKI} and \eqref{varKmu}
follow from~\eqref{varkp1} by the application of the formalism of \cite{IMRN}.
\end{proof}

The formula \eqref{varKI} implies the following immediate corollary which we are going to need below.

\begin{Corollary}\label{symmK}
Let $x\in \CC$ be a point such that $\pi(x)$ is independent of coordinates $I_\a$. Then the vector of Riemann constants satisfies
\be
\f{\delta K^x_\a}{\delta I_\b}=\f{\delta K^x_\b}{\delta I_\a}.
\label{grad}
\ee
\end{Corollary}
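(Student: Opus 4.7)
The plan is to read off the symmetry directly from the variational formula \eqref{varKI}. Recall that this formula expresses $\delta K^x_\b/\delta I_\a$ as a sum of residues at the branch points of the spectral cover, and the integrand is
\[
\f{v_\a(t)\,v_\b(t)\,\d_t\log\frac{\s(t,x_0)E(t,x)^{g-1}}{\sqrt{v(t)}}}{\d z(t)\,\d y(t)}.
\]
The crucial observation is that the only factors carrying the indices $\a$ and $\b$ are $v_\a(t)$ and $v_\b(t)$, which appear multiplicatively; the remaining factors (the logarithmic derivative, $\d z(t)$ and $\d y(t)$) depend on the moduli and on the basepoint $x$, but not on the choice of indices. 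Hence swapping $\a\leftrightarrow \b$ in the formula leaves the integrand invariant, and therefore leaves the sum of residues invariant.

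Concretely, I would simply write out both sides of \eqref{grad} by applying \eqref{varKI} once with the pair $(\a,\b)$ and once with $(\b,\a)$, observe the identity of the two expressions term by term at each branch point $x_j$, and sum. Since $\pi(x)$ is assumed independent of the moduli $I_\a$, the differentiations on the two sides of \eqref{grad} are with respect to the same family of moduli holding the same base coordinate fixed, so \eqref{varKI} applies to both without modification.

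The only point that could require a minor comment is making sure that the hypothesis ``$\pi(x)$ independent of $I_\a$'' is used uniformly: both partial derivatives in \eqref{grad} are taken with $z(x)$ fixed, and that is precisely the hypothesis guaranteeing the applicability of the variational formula. No obstacle is expected; the content of the corollary is exactly the manifest symmetry $v_\a v_\b = v_\b v_\a$ inside the residue sum.
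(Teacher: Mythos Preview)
Your proposal is correct and is exactly the argument the paper intends: the corollary is stated as an immediate consequence of \eqref{varKI}, and the symmetry is precisely the manifest invariance of the residue integrand under $v_\a\leftrightarrow v_\b$. Nothing further is required.
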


Therefore, there exists a potential $F^x$ satisfying
\be
\f{\delta F^x}{\delta I_\a}= K^x.
\label{potK}
\ee
At the moment we don't know how to compute $F^x$ explicitly outside of hyperelliptic locus.

\section{Reduction to symplectic leaves of Kirillov--Kostant bracket}

Let us now discuss the relation of our results to the existing results concerning the Darboux coordinates
on the symplectic leaves of the Kirillov--Kostant Poisson structure (we refer to \cite{BBT} and references therein
for systematic description of previous results).

Consider the space $\Acal_0$ defined as
\begin{gather*}
\Acal_0=\Biggl\{\{A_j\}_{j=1}^m,\, \sum_{j=1}^m A_j=0 \Biggr\}\Big/{\sim},
\end{gather*}
where $A_j\in \mathfrak{sl}(n)$ and the equivalence relation $\sim$
identifies the sets $A_j$'s which differ by a~simultaneous conjugation $A_j\to S A_j S^{-1}$ with $S\in {\rm SL}(n)$.

The Poisson structure on $\Acal_0$ is given by the Kirillov--Kostant bracket for each $A_j$,
\begin{gather}\label{KK}
\bigl\{A_j^a, A_j^b\bigr\}= f^{ab}_c A_j^c
\end{gather}
with all other brackets vanishing.
The bracket \eqref{KK} is induced by the bracket \eqref{KKex} on the space~$\Acal$ under the natural map $\Acal\to \Acal_0$ defined by $A_j=G_j L_j G_j^{-1}$.

In contrast to the bracket \eqref{KKex} which is non-degenerate on $\Acal$, the bracket \eqref{KK} on $\Acal_0$ is degenerate with Casimirs given by the eigenvalues $\l_j^{k}$ of matrices $A_j$.
On symplectic leaves $\Acal_0^L$ defined by $L_j={\rm const}$, the Poisson structure can be inverted. This gives the symplectic form~$\omega_{\Acal_0}^L$ defined by the first term in \eqref{sf1},
\begin{gather}\label{sf0}
\omega_{\Acal_0}^L =
\tr \sum_{j=1}^m \delta G_j G_j^{-1}\wedge L_j \delta G_j G_j^{-1} .
\end{gather}
The form \eqref{sf0} is independent of the choice of eigenvectors $G_j$ of matrices $A_j$.

The space of spectral data corresponding to the space $\Acal_0^L$ is obtained from the space $\Scal$ \eqref{SP1} by omitting the variables $R_j$,
\begin{gather*}
\Scal=\bigl\{ \{Q_j\}_{j=2}^n,\, q\in \C^g\bigr\}.
\end{gather*}
On symplectic leaves $\Scal^L$ in $\Scal$, defined by conditions ${\rm res}|_{t_k^{(j)}}={\rm const}$ for all $k$ and $j$, the canonical symplectic form is obtained by restriction of $\omega_S$ \eqref{oS},
\begin{gather*}
\omega_{\Scal}^L=\sum_{\a=1}^g \delta I_\a\wedge \delta q_\a  .
\end{gather*}

Now Corollary \ref{corsymp} implies that the pullback of the form $\omega_{\Scal}$ under the spectral transform
$\Acal_0^L\to\Scal^L$ coincides with $\omega_{\Acal_0}^L $.

\begin{Proposition} \label{probdar}
The periods $I_\a$ and the components $q_\a$ of the vector $q$ are Darboux on the
symplectic leaf of the space $\Acal_0$,
\begin{gather}\label{omdd}
\omega_{\Acal_0}^L=\sum_{\a=1}^g \delta I_\a\wedge \delta q_\a.
\end{gather}
\end{Proposition}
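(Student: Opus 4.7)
The plan is to deduce Proposition \ref{probdar} directly from Corollary \ref{corsymp} by restricting the expression for $\omega_\Acal$ to the level set $L_j=\mathrm{const}$ and then descending through the natural projection $\pi\colon\Acal|_{L=\mathrm{const}}\to \Acal_0^L$ that forgets the toric variables $R_j$.

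First, I would invoke the identity \eqref{oAS},
\[
\omega_{\Acal}= \sum_{\a=1}^g \delta I_\a \wedge \delta q_\a
+ \sum_{j=1}^m\sum_{k=1}^{n-1} \delta\mu_j^{(k)} \wedge \delta\rho_j^{(k)},
\]
and restrict to $\pi^{-1}(\Acal_0^L)$. Under the parametrization \eqref{lmu1} of the diagonal matrices $L_j$, fixing $L_j$ is the same as fixing all $\mu_j^{(k)}$, so $\delta\mu_j^{(k)}=0$ on this locus and the toric sum drops out, leaving $\sum_\a \delta I_\a\wedge\delta q_\a$.

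Next, I would identify this restriction with the pullback $\pi^*\omega_{\Acal_0}^L$. Setting $\delta L_j=0$ in the definition \eqref{sf1} of $\omega_\Acal$ leaves precisely the form \eqref{sf0} defining $\omega_{\Acal_0}^L$; the invariance of \eqref{sf0} under the fiberwise toric action $G_j\mapsto G_j R_j$ (which is exactly what makes \eqref{sf0} descend to the quotient in the first place) confirms that it pulls back correctly. In parallel, I would check that the coordinates $I_\a$ and $q_\a$ themselves factor through $\pi$: the $a$-periods of $v=y\,\d z$ depend only on the spectral curve and hence only on $A(z)=G_j L_j G_j^{-1}$, while the vector $q$ is built in Section \ref{DST} from the divisor $\hat D$ of the normalized eigenvector $\phi=\psi/\psi_1$, which is manifestly independent of the right column scaling by $R_j$. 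Once both the restricted form and the coordinates $(I_\a, q_\a)$ are seen to descend, the identity $\omega_{\Acal_0}^L = \sum_\a \delta I_\a\wedge\delta q_\a$ follows.

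There is no real obstacle here: all the analytic content — the variational formulas for the Szeg\H{o} kernel and the residue computations — has already been absorbed into the proof of Corollary \ref{corsymp}. What remains is the conceptually routine verification that restriction to the leaf commutes with the forgetful projection, which amounts to the toric invariance of $I_\a$, $q_\a$, and of the first term of $\omega_\Acal$; these are automatic from the constructions in Section \ref{DST}.
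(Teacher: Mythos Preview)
Your proposal is correct and follows essentially the same route as the paper: the paper states the proposition as an immediate consequence of Corollary~\ref{corsymp}, obtained by restricting \eqref{oAS} to the leaf $L_j=\mathrm{const}$ (so that the $\delta\mu_j^{(k)}\wedge\delta\rho_j^{(k)}$ terms vanish) and observing that the remaining form and the coordinates $(I_\a,q_\a)$ descend to $\Acal_0^L$. Your write-up simply makes the descent step more explicit than the paper does.
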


The Darboux coordinates given by proposition (\ref{probdar}) are similar to the known action-angle variables ($(I_i,\theta_i)$ variables in (5.75) of \cite{BBT}). However, they are not really the same. Namely, the actions $I_\a$ are the same while the coordinates $q_\a$ differ from $\theta_i$ of \cite{BBT}. To deduce \cite[formula~(5.75)]{BBT} from \eqref{omdd}, one needs to use
Corollary \ref{symmK} which implies the proposition.
\begin{Proposition}
Define the vector $\ang^x\in \C^g$ by
\[
\ang^x=q+ K^{x}
\]
for any $x$ such that $\pi(x)$ remains constant under variation of moduli $($in particular, one can choose $x=z_0^{(j)}$ for any $j$$)$.
Then the form $\omega_{\Acal_0}^L$ can also be written as
\[
\omega_{\Acal_0}^L=\sum_{\a=1}^g \delta I_\a\wedge \delta \ang^x_\a.
\]
\end{Proposition}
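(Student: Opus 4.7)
The plan is to reduce the identity to the statement already available in Proposition~\ref{probdar} by substituting $q_\a = \ang^x_\a - K^x_\a$ and showing that the resulting correction term vanishes. Concretely, starting from
\[
\omega_{\Acal_0}^L = \sum_{\a=1}^g \delta I_\a \wedge \delta q_\a,
\]
I would write $\delta q_\a = \delta \ang^x_\a - \delta K^x_\a$, so that the form splits as
\[
\omega_{\Acal_0}^L = \sum_{\a=1}^g \delta I_\a \wedge \delta \ang^x_\a \;-\; \sum_{\a=1}^g \delta I_\a \wedge \delta K^x_\a.
\]
The task therefore reduces to showing that the second sum vanishes.

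Next I would expand $\delta K^x_\a$ in the natural coordinates on the moduli space of spectral curves. Since, by assumption, $\pi(x)$ is independent of the moduli, the variational formulas \eqref{varKI} and \eqref{varKmu} give
\[
\delta K^x_\a = \sum_{\b=1}^g \f{\delta K^x_\a}{\delta I_\b}\,\delta I_\b \;+\; \sum_{j,k} \f{\delta K^x_\a}{\delta \mu_j^{(k)}}\,\delta \mu_j^{(k)}.
\]
On the symplectic leaf $\Acal_0^L$, the $L_j$'s are fixed, which by \eqref{lmu1} means that the residue coordinates $\mu_j^{(k)}$ are constants, and hence $\delta \mu_j^{(k)} = 0$. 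Consequently only the first sum survives, and
\[
\sum_{\a=1}^g \delta I_\a \wedge \delta K^x_\a \;=\; \sum_{\a,\b=1}^g \f{\delta K^x_\a}{\delta I_\b}\; \delta I_\a \wedge \delta I_\b.
\]

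The final step is to invoke Corollary~\ref{symmK}, which asserts the symmetry
\[
\f{\delta K^x_\a}{\delta I_\b}=\f{\delta K^x_\b}{\delta I_\a}.
\]
Contracting a symmetric tensor with the antisymmetric wedge product $\delta I_\a \wedge \delta I_\b$ yields zero, so the correction term vanishes identically and the two expressions for $\omega_{\Acal_0}^L$ coincide.

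There is no serious analytic obstacle here: once Proposition~\ref{probdar} and Corollary~\ref{symmK} are in hand, the argument is essentially a short linear-algebra manipulation. The only subtlety worth emphasizing is the restriction to the symplectic leaf, which is precisely what kills the $\mu_j^{(k)}$ derivatives and lets the symmetry of $\delta K^x/\delta I$ do the rest of the work.
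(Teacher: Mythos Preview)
Your argument is correct and follows essentially the same route as the paper: substitute $\ang^x = q + K^x$ into the Darboux expression from Proposition~\ref{probdar}, observe that on the leaf only the $I_\a$-dependence of $K^x$ survives, and kill the remaining term using the symmetry of Corollary~\ref{symmK}. If anything, you are slightly more explicit than the paper in justifying why the $\mu_j^{(k)}$-derivatives drop out.
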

\begin{proof}Since the vector of Riemann constants depends only on the moduli $I_\a$ of the spectral curve, we get
\[
\sum_{\a=1}^g \delta I_\a\wedge \delta \ang^x_\a=\sum_{\a=1}^g \delta I_\a\wedge \delta q_\a+
\sum_{\a=1}^g \delta I_\a\wedge \delta K^{x}_\a .
\]
The last sum equals
\[
\sum_{\a<\b} \left(\f{\delta K^{x}_\a}{\delta I_\b}- \f{\delta K^{x}_\b}{\delta I_\a} \right) \delta I_\a\wedge \delta I_\b
\]
which vanishes due to \eqref{grad}.
\end{proof}

The variables $(I_\a,\ang_\a^x)$ are the traditional action-angle variables described in \cite[Secton~5.11]{BBT} ($\ang_\a^x$~are denoted there by $\theta_i$).
The advantage of the variables $q_\a$ over $\ang_\a^x$ is that the former are intrinsic, i.e., the vector $q$ is determined uniquely by a point of the space $\Acal$
(in particular, $q$~independent of he choice of the point $z_0$ used in our explicit construction of the direct spectral transform). On the other hand, the vector $\ang^x=q+K^x$ depends on the point $x$ (as long as the projection $z(x)$ remains moduli-independent), i.e., there is the one-parametric family of the angle variables $\gamma_\a^x$ which are related by
\[
\ang^{{\tilde x}}=\ang^x+ (g-1)\Ucal_x({\tilde x}).\]
On the symplectic leaves $\Acal_0^L$ within the space $\Acal_0$ both sets $(I_\a,\,q_\a)$ are $(I_\a,\,\ang^x_\a)$
give Darboux coordinates for $\omega^L_{\Acal_0}$. However, on the extended space $\Acal$ this is not the case any more. Since the vector $K^x$ non-trivially depends on eigenvalue coordinates \smash{$\mu_j^{(k)}$}, in the
set of Darboux coordinates on~$\Acal$ given by $\bigl(I_\a,\,q_a,\,\mu_j^{(k)}\,\rho_j^{(k)}\bigr)$ one can not replace $q_\a$'s by
$\ang^x_\a$. The symplectic form $\omega_\Acal$ in the coordinates $\bigl(I_\a,\,\ang^x_\a,\,\mu_j^{(k)}\,\rho_j^{(k)}\bigr)$ does not have canonical form, or even constant coefficients.

On the symplectic leaf $\Acal_0^L$, where both sets $(q_\a,I_\a)$ and $(\ang^x_\a,I_\a)$ provide Darboux coordinates, the generating function between them can be defined by
the difference of the corresponding symplectic potentials
\[
\delta F^x= \sum_{\a=1}^g \ang^x_\a \delta I_\a - \sum_{\a=1}^g q_\a \delta I_\a= \sum_{\a=1}^g K_\a \delta I_\a.
\]
The function $F^x$ is therefore nothing but the potential \eqref{potK} for the vector of Riemann constants
which we do not know in general. However, $F^x$ can be easily computed in the hyperelliptic case.

\begin{Remark}
Symplectic properties of spectral transform for various operators (both finite-dimensional and infinite-dimensional) were extensively studied in the theory of integrable systems and summarized in various textbooks. We refer here to \cite{BBT,FadTak}.

We would like to point out some connection (or lack thereof) of our results to previous ones.
First, we remind that the form
\begin{gather}\label{omega10}
\omega=\operatorname{tr} \delta G G^{-1} \wedge L \delta G G^{-1}
\end{gather}
is the symplectic form on the symplectic leaf $L={\rm const}$ of the Poisson--Lie--Kirillov--Kostant--(Souriau) bracket on the space of matrices $A=GLG^{-1}$ (with $L$ the diagonal form of $A$). The 1-form
\[
\theta= \operatorname{tr} L \, G^{-1} \delta G
\] is the corresponding symplectic potential. This relation between the form \eqref{omega10} and the PLKKS bracket has been known since the 1993 paper \cite{AlekMalk}.

 The form \eqref{omega10} was later used as a building block of various symplectic forms in
 the theory of finite and infinite-dimensional integrable systems. For example, in the paper by Krichever and Phong \cite{KricheverPhong} both $G$ and $L$ were considered as power series or polynomials of the spectral parameter, and the actual symplectic form was given by the residue of the expression \eqref{omega10} at
 infinity. Similarly, in more recent paper by Krichever \cite{Krichever} both $G$ and $L$ were
 depending on a~point of a Riemann surface such that the expression \eqref{omega10} becomes a meromorphic differential there. To get the actual symplectic form this differential needs to be integrated over
 a closed contour (around a pole with residue, or one of homology cycles).

 In this paper the phase space is a direct sum of several copies of elementary phase spaces endowed
 with the form
 \eqref{omega10}. This form becomes symplectic on the full phase space after an~addition of the terms which give the non-degenerate extension of the PLKKS bracket (see the second term in \eqref{omega0}).

 Thus the objects we are dealing with in this paper are different from the ones considered before. Our results on symplectic properties of the spectral transform for rational matrices look similar, but are actually also different from the previous ones (we get a new set of spectral Darboux coordinates). Finally, the methods that we apply here (the variational formulas on Riemann surfaces)
 were not used in this context before. The variational formulas for Szeg\H{o} kernel in fact seem to be new.
\end{Remark}

\subsection*{Acknowledgements}
The work of M.B.\ was supported in part by the Natural Sciences and Engineering Research Council of Canada (NSERC) grant RGPIN-2016-06660.
The work of D.K.\ was supported in part by the NSERC grant
RGPIN-2020-06816.

\pdfbookmark[1]{References}{ref}

\LastPageEnding


\begin{thebibliography}{99}
\footnotesize\itemsep=0pt

\bibitem{AHH}
Adams M.R., Harnad J., Hurtubise J., Darboux coordinates and
 {L}iouville--{A}rnol'd integration in loop algebras, \href{https://doi.org/10.1007/BF02097398}{\textit{Comm. Math.
 Phys.}} \textbf{155} (1993), 385--413, \href{https://arxiv.org/abs/hep-th/9210089}{arXiv:hep-th/9210089}.

\bibitem{AF}
Alekseev A.Yu., Faddeev L.D., {$(T^*G)_t$}: a toy model for conformal field
 theory, \href{https://doi.org/10.1007/BF02101512}{\textit{Comm. Math. Phys.}} \textbf{141} (1991), 413--422.

\bibitem{AlekMalk}
Alekseev A.Yu., Malkin A.Z., Symplectic structures associated to
 {L}ie--{P}oisson groups, \href{https://doi.org/10.1007/BF02105190}{\textit{Comm. Math. Phys.}} \textbf{162} (1994),
 147--173, \href{https://arxiv.org/abs/hep-th/9303038}{arXiv:hep-th/9303038}.

\bibitem{BBT}
Babelon O., Bernard D., Talon M., Introduction to classical integrable systems,
 \textit{Camb. Monogr. Math. Phys.}, \href{https://doi.org/10.1017/CBO9780511535024}{Cambridge University Press}, Cambridge, 2003.

\bibitem{Balduzzi}
Balduzzi D., Donagi--{M}arkman cubic for {H}itchin systems, \href{https://doi.org/10.4310/MRL.2006.v13.n6.a7}{\textit{Math. Res.
 Lett.}} \textbf{13} (2006), 923--933, \href{https://arxiv.org/abs/math.AG/0607060}{arXiv:math.AG/0607060}.

\bibitem{IMRN}
Bertola M., Korotkin D., Spaces of {A}belian differentials and {H}itchin's
 spectral covers, \href{https://doi.org/10.1093/imrn/rnz142}{\textit{Int. Math. Res. Not.}} (2021), 11246--11269,
 \href{https://arxiv.org/abs/1812.05789}{arXiv:1812.05789}.

\bibitem{CMP}
Bertola M., Korotkin D., Tau-functions and monodromy symplectomorphisms,
 \href{https://doi.org/10.1007/s00220-021-04224-6}{\textit{Comm. Math. Phys.}} \textbf{388} (2021), 245--290,
 \href{https://arxiv.org/abs/1910.03370}{arXiv:1910.03370}.

\bibitem{Borot_Eynard}
Borot G., Eynard B., Geometry of spectral curves and all order dispersive
 integrable system, \href{https://doi.org/10.3842/SIGMA.2012.100}{\textit{SIGMA}} \textbf{8} (2012), 100, 53~pages,
 \href{https://arxiv.org/abs/1110.4936}{arXiv:1110.4936}.

\bibitem{DonMar}
Donagi R., Markman E., Cubics, integrable systems, and {C}alabi--{Y}au
 threefolds, in Proceedings of the {H}irzebruch 65 {C}onference on {A}lgebraic
 {G}eometry ({R}amat {G}an, 1993), \textit{Israel Math. Conf. Proc.}, Vol.~9,
 Bar-Ilan University, Ramat Gan, 1996, 199--221, \href{https://arxiv.org/abs/alg-geom/9408004}{arXiv:alg-geom/9408004}.

\bibitem{Etingof}
Etingof P., On the dynamical {Y}ang--{B}axter equation, in Proceedings of the
 {I}nternational {C}ongress of {M}athematicians, {V}ol.~{II} ({B}eijing,
 2002), Vol.~2, Higher Education Press, Beijing, 2002, 555--570,
 \href{https://arxiv.org/abs/math.QA/0207008}{arXiv:math.QA/0207008}.

\bibitem{FadTak}
Faddeev L.D., Takhtajan L.A., Hamiltonian methods in the theory of solitons,
 \textit{Class. Math.}, \href{https://doi.org/10.1007/978-3-540-69969-9}{Springer}, Berlin, 2007.

\bibitem{Fay73}
Fay J.D., Theta functions on {R}iemann surfaces, \textit{Lect. Notes Math.},
 Vol. 352, \href{https://doi.org/10.1007/BFb0060090}{Springer}, Berlin, 1973.

\bibitem{Fay92}
Fay J.D., Kernel functions, analytic torsion, and moduli spaces, \href{https://doi.org/10.1090/memo/0464}{\textit{Mem.
 Am. Math. Soc.}} \textbf{96} (1992), vi+123~pages.

\bibitem{Hitchin1}
Hitchin N.J., The self-duality equations on a {R}iemann surface, \href{https://doi.org/10.1112/plms/s3-55.1.59}{\textit{Proc.
 Lond. Math. Soc.}} \textbf{55} (1987), 59--126.

\bibitem{Hitchin2}
Hitchin N.J., Stable bundles and integrable systems, \href{https://doi.org/10.1215/S0012-7094-87-05408-1}{\textit{Duke Math.~J.}}
 \textbf{54} (1987), 91--114.

\bibitem{KalKor}
Kalla C., Korotkin D., Baker--{A}khiezer spinor kernel and tau-functions on
 moduli spaces of meromorphic differentials, \href{https://doi.org/10.1007/s00220-014-2081-2}{\textit{Comm. Math. Phys.}}
 \textbf{331} (2014), 1191--1235, \href{https://arxiv.org/abs/1307.0481}{arXiv:1307.0481}.

\bibitem{Knizhnik}
Knizhnik V.G., Multiloop amplitudes in the theory of quantum strings and
 complex geometry, \href{https://doi.org/10.1070/PU1989v032n11ABEH002775}{\textit{Soviet Phys. Uspekhi}} \textbf{32} (1989), 945--971.

\bibitem{JDG}
Kokotov A., Korotkin D., Tau-functions on spaces of abelian differentials and
 higher genus generalizations of {R}ay--{S}inger formula,
 \href{https://doi.org/10.4310/jdg/1242134368}{\textit{J.~Differential Geom.}} \textbf{82} (2009), 35--100,
 \href{https://arxiv.org/abs/math.SP/0405042}{arXiv:math.SP/0405042}.

\bibitem{RH1}
Korotkin D., Matrix {R}iemann--{H}ilbert problems related to branched coverings
 of {${\mathbb C}{\rm P}^1$}, in Factorization and Integrable Systems ({F}aro,
 2000), \textit{Oper. Theory Adv. Appl.}, Vol. 141, \href{https://doi.org/10.1007/978-3-0348-8003-9_2}{Birkh\"auser}, Basel, 2003,
 103--129, \href{https://arxiv.org/abs/math-ph/0106009}{arXiv:math-ph/0106009}.

\bibitem{Annalen}
Korotkin D., Solution of matrix {R}iemann--{H}ilbert problems with
 quasi-permutation monodromy matrices, \href{https://doi.org/10.1007/s00208-004-0528-z}{\textit{Math. Ann.}} \textbf{329}
 (2004), 335--364, \href{https://arxiv.org/abs/math-ph/0306061}{arXiv:math-ph/0306061}.

\bibitem{Krichever77}
Krichever I.M., Integration of nonlinear equations by the methods of algebraic
 geometry, \href{https://doi.org/10.1007/BF01135528}{\textit{Funct. Anal. Appl.}} \textbf{11} (1977), 12--26.

\bibitem{Krich92}
Krichever I.M., The {$\tau$}-function of the universal {W}hitham hierarchy,
 matrix models and topological field theories, \href{https://doi.org/10.1002/cpa.3160470403}{\textit{Commun. Pure Appl.
 Math.}} \textbf{47} (1994), 437--475, \href{https://arxiv.org/abs/hep-th/9205110}{arXiv:hep-th/9205110}.

\bibitem{Krichever}
Krichever I.M., Vector bundles and {L}ax equations on algebraic curves,
 \href{https://doi.org/10.1007/s002200200659}{\textit{Commun. Math. Phys.}} \textbf{229} (2002), 229--269,
 \href{https://arxiv.org/abs/hep-th/0108110}{arXiv:hep-th/0108110}.

\bibitem{KricheverPhong}
Krichever I.M., Phong D.H., Symplectic forms in the theory of solitons, in
 Surveys in Differential Geometry: Integrable Systems, \textit{Surv. Differ.
 Geom.}, Vol.~4, \href{https://doi.org/10.4310/SDG.1998.v4.n1.a6}{International Press}, Boston, MA, 1998, 239--313.

\bibitem{ZMNP}
Novikov S., Manakov S.V., Pitaevski\u{\i} L.P., Zakharov V.E., Theory of
 solitons: The inverse scattering method, \textit{Contemp. Soviet Math.}, Consultants
 Bureau, New York, 1984.

\end{thebibliography}
\end{document}